\newcounter{casenum}
\def\mytransformation{
	\pgfmathsetmacro{\myX}{0.9\pgf@x}
	\pgfmathsetmacro{\myY}{0.0045*(\pgf@x)*(\pgf@x)+(\pgf@y)}
	\setlength{\pgf@x}{\myX pt}
	\setlength{\pgf@y}{\myY pt}
}
\renewenvironment{proof}{{\it Proof.}}{\qed \medskip}
\newtheorem{theorem}{Theorem}[section]
\newtheorem{obs}[theorem]{Observation}
\newtheorem{claim}[theorem]{Claim}
\newtheorem{conjecture}[theorem]{Conjecture}
\newtheorem{lemma}[theorem]{Lemma}
\newtheorem{corollary}[theorem]{Corollary}
\begin{document}
	
	\title{Extended formulations for induced tree and path polytopes of chordal graphs }
	\shorttitle{induced tree and path polytopes of chordal graphs}
	
	\author[1,2]{Alexandre Dupont-Bouillard}[orcid = 0009-0007-4207-8948]
	\ead{dupont-bouillard@lipn.fr}
\shortauthors{Alexandre Dupont-Bouillard}
	\affiliation[1]{organization = {IRISA CNRS UMR 6074, Université de Rennes}, 
		adressline={263 Av. Général Leclerc}, 
		city={Rennes},
		 postcode={35000},
		 country={France}}
	
	\affiliation[2]{
		organization={LIX -- CNRS UMR 7161, Institut polytechnique de Paris}, 
		adressline = {route de Saclay}, 
		city= {Palaiseau}, 
		postcode= {91120},
		country = {France}}

	\begin{abstract}		
		In this article, we give an extended space formulation for the induced tree polytope and another one for the induced path polytope of chordal graphs, both with vertex and edge variables.
		These formulations are obtained by proving that, in chordal graphs, the extreme points of those polytopes respectively form two Hilbert bases.
				
		While the formulation for the induced tree polytope has a polynomial size, the formulation we propose for the induced path polytope involves an exponential number of inequalities. We identify which of these inequalities are facet‑defining and exhibit a polynomial time separation algorithm based on computing a minimum $s$-$t$ cut in an auxiliary graph.
		
		As corollaries, we obtain that the problems of finding an induced tree or path maximizing a linear function over the edges and vertices are solvable in polynomial time for the class of chordal graphs.  
	\end{abstract}
	
	\begin{keywords}  
		 induced paths, induced trees, extended formulations, chordal graphs, hilbert basis
	\end{keywords}

	\maketitle

	\section*{Motivations }
	
	In this article, we study the problems of finding maximum weight induced trees and paths where both the vertices and the edges are weighted. In particular, we fill a gap in the polyhedral literature associated with these problems. 
	
	\medskip
	\paragraph{Induced trees}
	The problem of finding the size of the largest induced tree in a graph is a problem that has been approached through extremal combinatorics~\cite{ERDOS198661,MATOUSEK2007307,FOX2009494}. In particular, an important question is that of bounding this number according to several graph parameters, such as the number of vertices, edges, radius, independence number, maximum clique, and connectivity.
	In~\cite{ERDOS198661}, a Ramsey-type bound is proposed on the minimum integer $n(k,t)$ such that every graph having at least $n(k,t)$ vertices contains either a clique of size $k$ or an induced tree of size $t$. Finally, they show that every connected graph with radius $r$ has an induced tree of size at least $2r-1$ by exhibiting a large induced path.  
	A related problem is the one of finding large induced forests~\cite{inducedforest} in which bounds for the largest induced forest are given with respect to maximum degree or stability number. Besides these studies, no polyhedral study of the induced tree polytope on particular classes has been proposed.
	The problem of computing the maximum size of an induced tree is known to be NP-hard~\cite{completePathTree} in general, but it is also the case in bipartite and triangle-free graphs with maximal degree 3~\cite{MATOUSEK2007307}. 
	Several integer linear programs have been proposed for the maximum weight induced tree problem~\cite{ILPinducedtree} based on different variable spaces, and extensive computational experiments have been made on general graphs.
	\medskip
	
	\paragraph{Induced paths}
	
	The problem of finding a longest induced path has applications in assessing the robustness of a network. It can be viewed as a variant of the graph diameter. In a graph where vertices represent communication devices that may fail, and edges represent possible transmission links, the length of the longest induced path indicates the worst-case number of intermediate devices required for a one-to-one communication between any two vertices $u$ and $v$ after the failure of a set of devices that do not disconnect $u$ from $v$.
	Another application is about error correcting codes~\cite{error}. 
	The induced paths of a graph are induced trees, and the type of questions raised in the previous paragraph can be raised considering this structure. The authors of~\cite{FOX2009494} show that every connected triangle-free graph on $n$ vertices has an induced path of length at least $\sqrt{n}$, and their approach can be generalized to obtain lower bounds for graphs with no clique of size $k$.
	In~\cite{esperet}, the authors show that every 3-connected planar graph contains an induced path on $\text{log}(|V|)$ vertices. In~\cite{jfraymond}, it is observed that a graph may contain a large path but no large induced path; for this reason, they characterize the sets of edges to be forbidden along a path of length $n$ to force the existence of an induced path of length $log(n)$. This result allows them to show that if a graph has a path of length $n$ and no $K_t$ as a topological minor, then it contains an induced path of order $log(n)^{\Omega (1/t log^2 t)}$.
	
	Finding a maximum vertex weighted induced path is also NP-hard~\cite{completePathTree}, but 
	polynomially solvable in $k$-chordal graphs, interval filament,  graphs decomposable by clique cutset or by splits whose prime subgraphs are polynomially solvable~\cite{GAVRIL2002203}, improved in~\cite{HANSEN2009135} and on graphs of bounded clique width, asteroidal triple free graphs~\cite{inducedpathcliquewifth}.
	Besides these combinatorial algorithms, no polyhedral study of the corresponding polytope has been made. A related polyhedra has however, been considered in~\cite{goel2025halfspacerepresentationspathpolytopes}: the polytope whose extreme points in $\{0,1\}^{E}$ encode paths between pairs of leaves in a tree. In trees, induced path and path are the same subgraphs, so the latter article describes the induced path (between leaves) polytope of trees with edge variables. Finally, extended space integer linear programs have been proposed for the maximum weight induced path~\cite{ILPinducedPath,ILPinducedPath2}, leading to computational experiments. 
	
	\section*{Contributions}
	
	In this article, we study both the induced tree polytope and the induced path polytopes of chordal graphs in two separate sections. 
	It turns out that beside being included one in the other, these two polytopes share similar properties. 
	Both following sections follow the same scheme, the first is about induced trees, and the second is about induced path.
	Each section starts by showing that the extended incidence vectors of their corresponding graph structure form a Hilbert basis in chordal graphs. As a byproduct, we obtain linear systems of inequalities describing the cones they generate in chordal graphs. 
	Finally, we show that intersecting these cones with the correct hyperplane yields the induced tree and path polytope of chordal graphs in extended variable space.
		 While the system we provide for the induced tree polytope is easily seen to be compact in chordal graphs, the one we provide for the induced path polytope has an exponential number of inequalities. 
	 Therefore, we characterize the ones that define facets and provide a polynomial separation algorithm for those.
	 
	 This study allows to solve the problems of finding a maximum vertex and edge weight induced tree or a maximum weight induced path in polynomial time for chordal graphs. Up to now, only the maximum vertex weighted induced path problem was known to be polynomially solvable on chordal graphs.

		\section*{Definitions}
	All the graphs in this article are simple. 
	Given a graph $G=(V,E)$, we denote its \textit{complement} by $\overline{G}=(V,\overline E)$, where $\overline{E} = \{ e \in \binom{V}{2} : e \notin E \}$. 
	We denote by $V(G)$ and $E(G)$ the vertex set and the edge set of $G$, respectively. Two vertices $u$ and $v$ are \emph{adjacent} if $uv \in E(G)$.
	Given a subset of vertices $W \subseteq V$, we denote by $E(W)$ the set of edges of $G$ having both endpoints in $W$, and by $\delta (W)$ the set of all edges having exactly one endpoint in $W$.
	When $W$ is a singleton $\{w\}$, we simply write $\delta (w)$. We say that the edges in $\delta (w)$ are \textit{incident} to $w$. Given two subsets of vertices $W_1$, $W_2$, we denote by $\delta(W_1,W_2)$ the set of edges having one endpoint in $W_1$ and the other in $W_2$.
	For $F \subseteq E$, we denote by $V(F)$ the set of vertices incident to any edge of $F$.
	Given $W\subseteq V$, the graph $G[W]=(W,E(W))$ is the \textit{subgraph of $G$ induced by $W$}. When $H$ is an induced subgraph of $G$, we say that $G$ {\em contains} $H$. We denote by $G \setminus v$ (resp. $G\setminus uv$¨) the graph obtained from $G$ by removing the vertex $v$ (resp. edge $uv$).
	Given a vertex $u\in V(G)$, we denote  by $N_G(u) = \{w \in V(G) : uw \in E(G) \}$ its \textit{neighborhood}, and by $N_G[u] = N_G(u) \cup \{ u\}$ its \textit{closed neighborhood}, when clear from context we simply write $N(u)$ and $N[u]$. The neighborhood of a vertex subset $W$, denoted by $N(W)$ is the set of vertices adjacent to at least one element of $W$.
	A vertex $w$ is \emph{complete} to a subset of vertices $W$ if it is adjacent to each vertex of $W$ and does not belong to $W$. 
	A vertex $u$ complete to $V \setminus \{u\}$ is said to be \emph{universal}. Given a set of vertices $W \subseteq V$, we denote by $C_G(W)$ the vertices complete to $W$ in $G$; we simply write $C(W)$ when the graph is clear from context.
	A \emph{clique} is a set of pairwise adjacent vertices. A \emph{stable set} is a set of pairwise nonadjacent vertices. A graph whose vertex set can be partitioned into two stable sets is called \emph{bipartite}.
	A vertex is \emph{simplicial} if its neighborhood induces a clique.
	
	A \textit{path} (resp. \textit{hole}) is a graph induced by a sequence of vertices $(v_1,\dots,v_p)$ whose edge set is $\{ v_iv_{i+1}:  \ i = 1,\dots, p-1 \} $ (resp. $\{ v_iv_{i+1}:  \ i = 1,\dots, p-1 \} \cup \{v_1v_p \}$ and $p \ge 4$). A \emph{cycle} is a set of edges inducing a connected graph with vertices of degree equal to 2. A \emph{tree} is a connected acyclic graph.
	
	A subset of vertices induces a path (resp. hole) if its elements can be ordered into a sequence inducing a path (resp. hole).
	By abuse of definition, we consider that a set of vertices inducing a path is an \emph{induced path}, similarly, an \emph{induced tree} is a set of vertices inducing a tree. 
	The {\em length} of a path or hole is its number of edges. The { \em parity} of a path or hole is the parity of its length. Given a path induced by the sequence $(w_1,\dots,w_k)$, 
	we call $w_1$, $w_k$ its \emph{extremities} and the set of vertices $\{w_2,\dots,w_{k-1}\}$ its interior moreover, we say that $P$ ends in $w_1$ and $w_k$. 
	The \textit{contraction} of an edge $uv$ in $G$ yields a new graph $G/uv$ built from $G$ by deleting $u$, $v$, adding a new vertex $w$, and adding the edges $wz$ for all $z\in N_G(u) \cup N_G(v)$; parallel edges and loops are removed. 
	For $F \subseteq E$, we denote by $G/F$ the graph obtained from $G$ by contracting all the edges in $F$.
	
	A graph is $k$-chordal if its holes have maximum size $k$; graphs with no holes are called \emph{chordal}. Equivalently, a graph is chordal if there exists an order $(v_1,\dots,v_k)$ of its vertices such that $v_i$ is simplicial in $G_i = G[\{v_i,\dots,v_k \}]$; such an order is called a \emph{perfect elimination order}. Moreover, the maximal cliques of $G$ are among $ \{N_{G_i}[v_i] : i \in \{1,\dots,k\} \}$, which shows that their number is linear.

	\medskip
	Let $\mathcal{P} = \{x:Ax \le b\}\subseteq \mathds{R}^n$ be a \emph{polyhedron}, that is, the intersection of finitely many half-spaces. The \emph{dimension} of $\mathcal{P}\subseteq \mathds{R}^n$, denoted by $\textrm{dim}~\mathcal{P}$, is the maximum number of affinely independent points in $\mathcal{P}$ minus one. 
	If $a \in \mathds{R}^n\setminus \{0\}$, $\alpha \in \mathds{R}$, then the inequality $a^\top x \le \alpha$ is said to be \emph{valid} for $\mathcal{P}$ if $\mathcal{P} \subseteq \{ x \in \mathds{R}^n : a^\top x \le \alpha \}.$ 
	We say that a valid inequality $a^\top x \le \alpha$ \emph{defines a facet} of $\mathcal{P}$ if $\textrm{dim}(\mathcal{P} \cap \{x \in \mathds{R}^n : a^\top x = \alpha \})$ is equal to $\textrm{dim}~\mathcal{P}$ minus one. 
	The polyhedron $\mathcal{P} \subseteq \mathds{R}^n$ is \emph{full dimensional} when dim~$\mathcal{P}$ is equal to $n$.
	In that case, a linear system $Ax \le b$ that defines $\mathcal{P}$ is \emph{minimal} if each inequality of the system defines a facet of $\mathcal{P}$. Moreover, these \emph{facet-defining inequalities} are unique up to scalar multiplication.
	The \emph{size} of a linear system of inequalities is its number of rows.
	A \emph{face} of $\mathcal{P}$ is a polyhedron contained in $\mathcal{P}$ obtained by setting to equality some inequalities in $Ax\leq b$.
	Equivalently, a face is the intersection of the supporting hyperplane of a valid inequality and $\mathcal{P}$.
	\emph{Extreme points} of a polyhedron are its faces of dimension zero.
	Each extreme point satisfies $\textrm{dim}~\mathcal{P}$ linearly independent valid inequalities for $\mathcal{P}$ with equality. 
	A polyhedron $\mathcal{P} \subseteq \mathds{R}^n$ is said to be \emph{integer} if all its non-empty faces contain an integer point.   
	A \emph{polytope} is a bounded polyhedron, equivalently, it can be defined as the convex hull of a finite number of points $P$ and is denoted by $\textrm{conv} (P)$, it is binary if all its extreme points are binary. 
	A \emph{cone} is a polyhedron described by a linear system with a right-hand side equal to 0; it is pointed if it contains no affine space. 
	Given a set of points $P \subseteq \mathds{R}^n$, we denote by $\text{cone} (P)$ the conic hull of $P$'s elements, that is, the minimal cone containing $P$. A set of points $P$ is a generating set of the cone $\mathcal{C}$ if $\text{cone} (P) = \mathcal{C}$.
	The minimal generating sets of pointed cones are unique up to positive scalar multiplication; we call the elements of such a set its \emph{generators} they are in one-to-one correspondence with its faces of dimension 1. 
	 A set of points $X = \{\chi_1,\dots,\chi_n\}$ forms a \emph{Hilbert basis} if every integer point in the conic hull of $X$ is expressible as an integer nonnegative combination of elements of $X$. We say that a finite set of vectors $\mathcal{V}$ \emph{form a Hilbert basis} of a cone $\mathcal{C}$ if the integer points of $\mathcal{C}$ are obtained as natural integer combinations of elements of $\mathcal{V}$ and if $\mathcal{V} \subseteq \mathcal{C}$. 
	\medskip
	
	In this work, the polyhedra we consider are convex hulls of points encoding structured induced subgraphs of an input graph $G=(V,E)$. 
	An induced subgraph $G'=(V',E')$ with $E' = E(V')$, can be encoded by its \emph{incidence vector} $\chi^{V'} \in \{0,1\}^{V}$ whose components associated with elements of $V'$ are equal to 1 and the remaining components equal 0. As induced trees and induced path are naturally described by vertex set, one might call the convex hull of their incidence vectors the natural variable space polytope.
	In this article, we consider the \emph{extended incidence vector} denoted by $\xi^{V'} = (\chi^{V'},\zeta^{E'}) \in \{0,1\}^{V} \times \{0,1\}^E$ where the components of $\zeta^{E'}$ associated with elements of $E $ are equal to 1 and the remaining components equal 0. Considering extended descriptions such as proposed here is a common way to obtain compact polyhedral descriptions of combinatorial polytopes~\cite{exntededfo}. In particular, considering extended spaces sometimes allows to obtain compact descriptions, whereas the natural variable may require exponentially many constraints.
	Given a point $(x,y) \in \mathds{R}^{V} \times \mathds{R}^E$ we denote by  $(x,y)|_{G'}$ its restriction to elements of $G'$.

	Given a graph $G=(V,E)$, the polytopes we consider in this article are defined as follows:
	\begin{itemize}
		\item its \emph{induced tree cone} $\text{cone} \{ \xi^W : \text{ $W$ is an induced tree of $G$}   \}$
		\item its \emph{induced tree polytope} $\text{conv} \{ \xi^W : \text{ $W$ is an induced tree of $G$}   \}$
		\item its \emph{induced path cone} $\text{cone} \{ \xi^W : \text{ $W$ is an induced path of $G$}\}$
		\item its \emph{induced path polytope} $\text{conv} \{\xi^W : \text{ $W$ is an induced path of $G$}\}$
	\end{itemize}

	\paragraph{Notations}
	Here and later, calligraphic letters denote polyhedra.
	Capital letters denote sets of vertices or edges, or graphs (note that $\delta$ is also used for sets of edges). Lowercase letters refer to vertices, edges or numbers.
	The symbol $\Gamma$ is reserved for sets of induced trees,
	while $\Psi$ for sets of induced paths, $\Omega$ for sets of cliques and $\lambda$ refers to vector of integer coefficients. Finally, $\Delta$ refers to gaps which we aim to reduce to 0.

	\section{Induced tree polytope of chordal graphs}

	In this section, we show that the induced tree's extended incidence vectors of chordal graphs form a Hilbert basis, and then we use this fact to exhibit a linear system describing the induced tree polytope of chordal graphs.
	
	\subsection{Induced trees of chordal graphs and Hilbert basis}
	
	Given a graph $G$, and a vertex $v$ we denote by $\Omega^v$ the set of maximal cliques containing $v$ and by $\mathcal{C}_T(G)$ the cone defined by the following system of inequalities:
	
	\begin{align}
		\sum_{u \in K\setminus \{v\}} y_{uv} &\le x_v  && \forall v \in V, \ K \in \Omega^v\label{iq:neighClique}\\
		-x_v &\le 0 && \forall v \in V\\
		-y_e &\le 0 && \forall e \in E \label{iq:trivialYe}
	\end{align}
	
	For binary points, Inequalities~\eqref{iq:neighClique} express the fact that no two edges incident to $v$ and contained in a same triangle belong to an induced tree.
	Observe that induced trees’ extended incidence vectors belong to $\mathcal{C}_T(G)$.
	
	\begin{theorem}\label{the:Hilbertbasistree}
		The induced tree extended incidence vectors of a chordal graph $G$ form a Hilbert basis of $\mathcal{C}_T(G)$.
	\end{theorem}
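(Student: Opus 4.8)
The plan is to argue by induction on $|V(G)|$, peeling off a simplicial vertex. Let $v$ be a simplicial vertex of $G$, which exists by chordality (take the first vertex of a perfect elimination order), and set $G' = G \setminus v$, which is again chordal. Since induced‑tree extended incidence vectors already lie in $\mathcal{C}_T(G)$ and $\mathcal{C}_T(G)$ is a rational polyhedral cone (hence the conic hull of its integer points), it suffices to show that every integer point $(x,y)\in\mathcal{C}_T(G)$ is a nonnegative integer combination of such vectors. Let $(x',y')$ be the restriction of $(x,y)$ to $\mathds{R}^{V(G')}\times\mathds{R}^{E(G')}$. The first step is to verify $(x',y')\in\mathcal{C}_T(G')$: the nonnegativity inequalities are inherited, and since every clique of $G'$ is contained in a maximal clique of $G$, each inequality \eqref{iq:neighClique} of the $G'$‑system is dominated by a corresponding inequality of the $G$‑system together with $y\ge 0$. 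By the induction hypothesis, $(x',y') = \sum_i \lambda_i\,\xi^{W_i'}$ for a finite multiset of induced trees $W_i'$ of $G'$ with $\lambda_i\in\mathds{Z}_{>0}$; it remains to re‑insert $v$ and account for the coordinate $x_v$ and the edge coordinates $\{y_{uv}:u\in N(v)\}$.

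The crux is a counting argument exploiting that, as $v$ is simplicial, $N[v]$ is a maximal clique of $G$ containing every $u\in N(v)$. Instantiating \eqref{iq:neighClique} at the pair $(u,N[v])$ for $u\in N(v)$ gives
\begin{equation*}
 y_{uv} + \sum_{w \in N(v)\setminus\{u\}} y_{uw} \;\le\; x_u .
\end{equation*}
Substituting the decomposition $(x',y')=\sum_i \lambda_i\xi^{W_i'}$ and using that two vertices of the clique $N(v)$ lying in a common $W_i'$ are joined by an edge of $W_i'$, this rearranges to
\begin{equation*}
 y_{uv} \;\le\; \sum_{i:\, u \in W_i'} \lambda_i\bigl(2 - |W_i' \cap N(v)|\bigr) \;\le\; \sum_{i:\, W_i' \cap N(v) = \{u\}} \lambda_i ,
\end{equation*}
the last step because every summand with $|W_i'\cap N(v)|\ge 2$ is nonpositive. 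In words, at least $y_{uv}$ of the trees $W_i'$ (counted with multiplicity) meet $N(v)$ in exactly $\{u\}$; these are precisely the trees to which $v$ may be appended as a pendant vertex to obtain an induced tree of $G$, and, $N(v)$ being a clique, no induced tree of $G$ can contain $v$ together with two vertices of $N(v)$. Crucially, for distinct $u\in N(v)$ these families of trees are pairwise disjoint, so the choices below can be made simultaneously without conflict.

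To conclude, for each $u\in N(v)$ select a sub‑multiset of total coefficient $y_{uv}$ among the trees $W_i'$ with $W_i'\cap N(v)=\{u\}$ and, in each selected copy, replace $\xi^{W_i'}$ by $\xi^{W_i'\cup\{v\}}$ — a vector differing from $\xi^{W_i'}$ only by a $1$ in the coordinate of $v$ and a $1$ in the coordinate of the edge $uv$ — and then add $x_v - \sum_{u\in N(v)} y_{uv}$ copies of the singleton induced tree $\xi^{\{v\}}$, a nonnegative number since it is exactly the slack of \eqref{iq:neighClique} at $(v,N[v])$. Each set used is an induced tree of $G$, and a coordinatewise check shows the resulting sum equals $(x,y)$; the base case $|V(G)|\le 1$ is immediate. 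I expect the step demanding the most care to be the counting identity of the second paragraph — correctly tracking the coefficients $\lambda_i$ and the intersections $W_i'\cap N(v)$, and in particular the passage to the sum over trees meeting $N(v)$ in a single vertex — while the verification that $(x',y')$ lies in $\mathcal{C}_T(G')$, though routine, is the hinge that lets the induction run.
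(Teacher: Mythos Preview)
Your proof is correct and follows essentially the same route as the paper: induction peeling off a simplicial vertex $v$, decomposing the restriction to $G\setminus v$, and using inequality~\eqref{iq:neighClique} at the pair $(u,N[v])$ to certify that enough of the trees in the decomposition meet $N(v)$ solely in $\{u\}$ so that $v$ can be appended. The only cosmetic difference is that the paper phrases the last step as a minimality argument --- choose a combination minimizing the residual gap $\Delta$ on $\delta(v)$ and show $\Delta=0$ --- whereas you compute the bound $y_{uv}\le\sum_{i:\,W_i'\cap N(v)=\{u\}}\lambda_i$ directly and construct the extension explicitly; the underlying counting is identical.
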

	
	\begin{proof}
		We proceed by induction. For initialization, consider the graph $G$ with a single vertex. $\mathcal{C}_T(G)$ is the positive half-line trivially generated by the unit vector.
		
		Induction goes as follows: let $v$ be a simplicial vertex of $G$ and suppose that the integer points of $\mathcal{C}_T(G\setminus v)$ are obtained as positive integer combinations of induced trees' extended incidence vectors. 
		Let $(x^*,y^*)$ be an integer point of $\mathcal{C}_T(G)$, and $(\lambda,\Gamma)$ an integer nonnegative combination of $G$'s induced trees' extended incidence vectors satisfying the following properties, where $(x',y') = \sum_{T \in \Gamma} \lambda_T \Gamma_T$:
	    \begin{enumerate}[label=(\roman*)]
			\item \label{condd:1} $(x',y') \le (x^*,y^*)$
			\item \label{condd:2} $ (x',y')|_{G\setminus v} = (x^*,y^*)|_{G\setminus v} $ 
			\item \label{condd:3} $(\lambda,\Gamma)$ minimize the quantity $\begin{array}{l}
				\\
				\Delta(\lambda,\Gamma) := \sum_{e \in \delta(v)} \Delta_e(\lambda, \Gamma)\\
				\text{where } \Delta_e(\lambda, \Gamma) := y_{e}^* - y_e' \ \forall e \in \delta(v) \end{array}$ among the combinations satisfying~\ref{condd:1}-\ref{condd:2}.
		\end{enumerate}
		$\Delta_{uv}(\lambda,\Gamma)$ corresponds to a Manhattan distance to $(x^*,y^*)$. 
		An integer nonnegative combination $(\lambda,\Gamma)$ satisfying Conditions~\ref{condd:1}-\ref{condd:2} is given by the induction hypothesis applied on the integer point $(x^*,y^*)|_{G\setminus v}$ of $\mathcal{C}_T(G\setminus v)$ hence, $(\lambda,\Gamma)$ exists.
		
		\medskip
		
		We first show that $\Delta_{uv}(\lambda,\Gamma) = 0 \ \forall uv \in E$.
		Fix $uv \in E$ and $H = G\setminus uv$, let us consider the following disjoints subsets of $\Gamma$:
		\begin{itemize}
			\item $\Gamma^u \ \forall u \in N(v)$ containing its elements whose intersection with $N[v]$ equals $\{u\}$.
			\item $\Gamma^u_2 \ \forall u \in N(v)$ containing its elements whose intersection with $N(v)$ is $\{u,w\}$ for some $w \in N(v)$.
			\item $\Gamma^v_u \ \forall u \in N(v)$ containing the elements having $v$ as an extremity and containing $u$.
		\end{itemize}  
		Note that $u$ is either a leaf or an internal node in the elements of $\Gamma^u_2$. We refer to Figure~\ref{fig:proofinducedtree} for an illustration of the different types of induced trees considered, it may be used as a visual aid to keep track of the definitions, we provide two different corresponding induced trees in different shades of green; these are the two bottom ones.

		\medskip

		Suppose by contradiction that $\Delta_{uv}(\lambda,\Gamma)$ is nonzero and that $\Gamma^u$ is nonempty for some neighbor $u$ of $v$. 
		Let $P$ be an element of $\Gamma^u$ and set $\alpha = \min (\Delta_{uv}(\lambda,\Gamma), \lambda_P)$.
		Note that $P \cup \{v\}$ is an induced tree, therefore decreasing $\lambda_P$ and increasing $\lambda_{P \cup \{v\}}$ by $\alpha$ yields a nonnegative integer combination of induced paths' extended incidence vectors of $G$, say $(\lambda',\Gamma')$. As $\xi^{P}|_{G\setminus uv} = \xi^{P \cup \{v\}}|_{G\setminus uv} $, $(\lambda',\Gamma')$ satisfies Conditions~\ref{condd:1}-\ref{cond:2} and $\Delta_e(\lambda,\Gamma) = \Delta_{e}(\lambda',\Gamma') \ \forall e \in \delta(v) \setminus \{uv\}$. Moreover, $\Delta_{uv}(\lambda',\Gamma') = \Delta_{uv}(\lambda,\Gamma)-\alpha$, Condition~\ref{condd:3} applied to $(\lambda,\Gamma)$ implies that $\alpha = 0$. Therefore, either $\Gamma^u$ is empty or $\Delta_{uv}(\lambda,\Gamma) =0$. Suppose that $\Gamma^u$ is empty, we now show that $\Delta_{uv}(\lambda,\Gamma) =0$ also holds.
		
		Given a family $\Gamma'$ of induced trees, we denote by $\lambda(\Gamma')$ the sum of $\lambda$'s components associated with elements of $\Gamma'$.

		As $(x^*,y^*)$ belongs to $\mathcal{C}_T(G)$ we have the following:
		$$ y_{uv}^* + \lambda(\Gamma_{2}^u) = y^*_{uv}+ y^*(\delta(u,N(v) \setminus \{u\} )) \le x_u^* = x_u' = \lambda(\Gamma^v_u) + \lambda(\Gamma_2^u) =   y_{uv}' + \lambda(\Gamma_2^u)$$
		
		The first equality is given by the fact that the only elements of $\Gamma$ contributing to $y_{e}',$ with $e\in \delta(u,N(v) \setminus \{u\} )$ are the ones of $\Gamma^u_2$ moreover, each element of $\Gamma^u_2$ contributes to at most one $y_e$ for $e \in \delta(u,N(v) \setminus u)$.
		The inequality is given by inequality~\eqref{iq:neighClique} associated with vertex $u$ and clique $N[v]$ .
		The following equality holds by Condition~\ref{condd:2}.
		 The two last equalities hold by the facts that the elements of $\Gamma$ contributing to $x_u'$ are the ones in $\Gamma^u \cup \Gamma^u_2 \cup \Gamma^v_u$, but $\Gamma^u$ is supposed to be empty, and that the only elements of $\Gamma$ contributing to $y_{uv}'$ are in $\Gamma^v_u$.
		
		This shows that $\Delta_{uv}(\lambda,\Gamma) =0$ holds for all edges $uv \in \delta (v)$. Finally, the singleton $\{v\}$ can be added to $\Gamma$ and $\lambda_{v}$ set to $x_v^* - y(\delta(v))^*$ to achieve $(x',y') = (x^*,y^*)$ with nonnegative integer coefficients $\lambda$ on induced tree's extended incidence vectors.
		
	\end{proof}
	
	Note that Theorem~\ref{the:Hilbertbasistree} is not a characterization of the graphs for which the induced tree’s  extended incidence vectors form a Hilbert basis. In fact, this statement is also true for $C_4$. Moreover, it is straightforward to see that this property is closed under taking induced subgraphs. 
	 A natural question would be to give a forbidden induced subgraph characterization of the graphs for which this property holds.

	\subsection{Induced trees of chordal graphs and integral systems}

	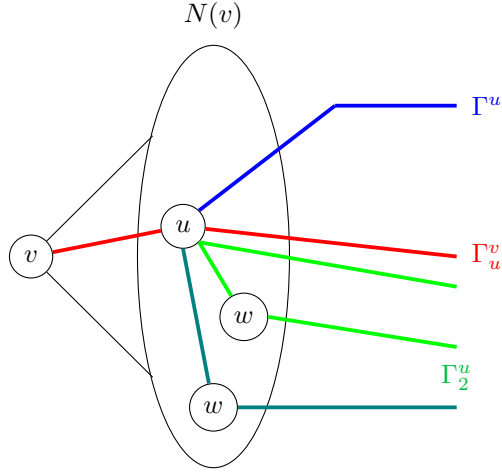
\begin{figure}
		\centering
		\begin{tikzpicture}
			[state/.style={circle, draw, minimum size=0.5cm},scale = 0.4]
			
			\node[state,draw,circle] (1)at(-2,3) {$v$};
			\node[state,draw,circle] (2)at(3,4) {$u$};
			\node at (4,11) {$N(v)$};
			\node[state,draw,circle] (3) at (4,-2) {$w$};
			\node[state,draw,circle] (4) at (5,1) {$w$};
			
			\node[color=blue,line width=0.05cm] at (13,8) {$\Gamma^u$};
			\node[color=green!50!teal,line width=0.05cm] at (12,-1) {$\Gamma^u_2$}; 
			\node[color=red,line width=0.05cm] at (13,3) {$\Gamma^v_u $ };
			
			\draw (4,3) ellipse (2.5cm and 7cm);
			\draw (1) -- (2,7) ;
			\draw (1) -- (2,-1) ;
			\draw[color=blue,line width=0.05cm] (2.north east) -- (8,8);
			\draw[color=blue,line width=0.05cm] (8,8) -- (12,8); 
			\draw[color=teal,line width=0.05cm] (2.south) -- (3);
			\draw[color=teal,line width=0.05cm] (3) -- (12,-2);
			\draw[color = green,line width=0.05cm] (4)-- (2.south east) ;
			\draw[color = green,line width=0.05cm](2.south east) -- (12,2) ;
			\draw[color = green,line width=0.05cm](4.east) -- (12,0);
			\draw[color = red,line width=0.05cm] (1) -- (2);
			\draw[color = red,line width=0.05cm] (2)-- (12,3);
			
		\end{tikzpicture}
		
		\caption{Sets of induced trees considered in the proof of Theorem~\ref{the:inducedPathChordal}}
		\label{fig:proofinducedtree} 
		
	\end{figure}
	
	Let us denote by $\mathcal{T}(G)$ the polyhedron defined as the intersection of $\mathcal{C}_T(G)$ and the following hyperplane:
	
	\begin{equation}
		\{ (x,y) : x(V) - y(E) = 1\} \label{iq:VboundE}
	\end{equation}

	\begin{theorem} \label{the:inducedPathChordal}
		A graph $G$ is chordal if and only if $\mathcal{T}(G)$ is binary.  
	\end{theorem}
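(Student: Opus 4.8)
The plan is to treat the two implications separately. For ``$G$ chordal $\Rightarrow\mathcal{T}(G)$ binary'', the key input is Theorem~\ref{the:Hilbertbasistree}: the extended incidence vectors $\xi^W$ of the induced trees of $G$ form a Hilbert basis of $\mathcal{C}_T(G)$. Since $\mathcal{C}_T(G)$ is a rational cone it equals the conic hull of its integer points, each of which is a nonnegative integer combination of the $\xi^W$; hence $\mathcal{C}_T(G)\subseteq\cone\{\xi^W:W\text{ an induced tree}\}$, and the reverse inclusion is immediate because each $\xi^W$ lies in $\mathcal{C}_T(G)$, so $\mathcal{C}_T(G)=\cone\{\xi^W\}$. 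Writing a point of $\mathcal{C}_T(G)$ as $\sum_W\mu_W\xi^W$ with $\mu\ge 0$ and splitting each induced tree into its connected components (each again an induced tree, with $\xi^W=\sum_i\xi^{W_i}$), one may assume every $W$ appearing is a connected induced tree, so that $|W|-|E(W)|=1$ and $x(V)-y(E)=\sum_W\mu_W$ on $\mathcal{C}_T(G)$. Intersecting with $\{x(V)-y(E)=1\}$ then gives $\mathcal{T}(G)=\convexhull\{\xi^W:W\text{ a connected induced tree of }G\}$, a convex hull of finitely many $0/1$ vectors, hence a binary polytope.

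For the converse I argue by contraposition. Suppose $G$ is not chordal, so it contains an induced hole $C=v_1v_2\cdots v_p$ with $p\ge 4$; recall that $|E(C)|=|V(C)|=p$. Let $d:=\xi^{V(C)}\in\{0,1\}^{V}\times\{0,1\}^{E}$. Then $d\ne 0$ and $d$ satisfies $x(V)-y(E)=|V(C)|-|E(C)|=0$, the homogeneous form of the equation defining the hyperplane, so that once we know $d\in\mathcal{C}_T(G)$ it follows that $d$ is a nonzero recession direction of $\mathcal{T}(G)=\mathcal{C}_T(G)\cap\{x(V)-y(E)=1\}$. As $\mathcal{T}(G)$ is nonempty (it contains $\xi^{\{v\}}$ for any vertex $v$), this makes $\mathcal{T}(G)$ unbounded, hence not a polytope, and in particular not a binary polytope. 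It remains to verify $d\in\mathcal{C}_T(G)$: nonnegativity is clear, and for an inequality~\eqref{iq:neighClique} indexed by $v\in V$ and a maximal clique $K\ni v$, its left-hand side evaluated at $d$ is the number of neighbours of $v$ in $C$ that belong to $K$. If $v\notin V(C)$ this equals $0$, which is the $x_v$-coordinate of $d$; if $v\in V(C)$ then $v$ has exactly two neighbours in $C$, and since $C$ is chordless these two vertices are non-adjacent in $G$ and so cannot both lie in the clique $K$, so the left-hand side is at most $1$, again the $x_v$-coordinate of $d$.

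The forward direction is mostly bookkeeping once Theorem~\ref{the:Hilbertbasistree} is available; the only care needed is the component-decomposition step, which also handles the reading of ``induced tree'' as a possibly disconnected forest. The substance of the theorem lies in the converse, and the conceptual point I would stress is that one should not hunt for a fractional extreme point of $\mathcal{T}(G)$ (there need not be one) but rather notice that $\mathcal{T}(G)$ stops being bounded as soon as $G$ has an induced hole. The only genuine computation there is checking that the incidence vector of an induced hole satisfies every clique inequality~\eqref{iq:neighClique}, which is exactly where chordlessness of the hole is used; this is short, so I do not anticipate a real obstacle.
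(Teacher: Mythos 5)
Your proof is correct, and it shares the paper's overall strategy (Theorem~\ref{the:Hilbertbasistree} plus rationality of the cone to get $\mathcal{C}_T(G)=\cone\{\xi^W\}$ for the forward direction, an induced hole for the converse), but both halves are executed by genuinely different sub-arguments. For boundedness, the paper sums the valid inequalities $y(\delta(v_i,\{v_{i+1},\dots,v_n,v\}))\le x_{v_i}$ along a perfect elimination ordering ending in $v$ together with $x(V)-y(E)=1$ to conclude $x_v\le 1$; you instead normalize the conic representation, observing that after splitting each induced tree into connected components one has $x(V)-y(E)=\sum_W\mu_W$, so the hyperplane forces $\sum_W\mu_W=1$ and $\mathcal{T}(G)=\convexhull\{\xi^W : W \text{ a connected induced tree}\}$ outright. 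Your route is arguably cleaner, avoids a second appeal to chordality, and the component-splitting step is a real point of care given that the paper defines a tree as any acyclic graph. For the converse, the paper exhibits an explicit point with $x_v=2$, value $1$ on the other hole vertices and on the hole edges, which lies in $\mathcal{T}(G)$ but outside the unit hypercube; your recession direction $d=\xi^{V(C)}$ is the same object seen homogeneously (the paper's point is exactly $\xi^{\{v\}}+d$), and the verification against inequalities~\eqref{iq:neighClique} --- the two hole-neighbours of a vertex are non-adjacent, hence cannot both lie in one clique --- is identical in both arguments. The paper's phrasing has the minor advantage of refuting binarity even for a reader who would tolerate an unbounded set being called binary, but under the paper's definition (binary is a property of polytopes, hence of bounded sets) your unboundedness argument suffices.
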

	\begin{proof}
		$"\Rightarrow"$
		Suppose $G$ to be chordal.
		The extreme points of $\mathcal{T}(G)$ are exactly the scalings of $\mathcal{C}_T(G)$'s generators which belong to~\eqref {iq:VboundE}.
		As $\mathcal{C}_T(G)$ is defined with rational coefficients, each of its one-dimensional face contains integer points. This implies that a Hilbert basis of $\mathcal{C}_T(G)$ generates $\mathcal{C}_T(G)$. Therefore, Theorem~\ref{the:Hilbertbasistree} gives that induced trees extended incidence vectors of a chordal graph $G=(V,E)$ generate $\mathcal{C}_T(G)$. Extended incidence vectors of induced trees belong to~\eqref{iq:VboundE} and hence are extreme points of $\mathcal{T}(G)$. It remains to show that $\mathcal{T}(G)$ is bounded.  For each vertex $v$ of $G$, there exists a perfect elimination ordering ending in $v$, say $(v_1,\dots,v_n,v)$, this is a consequence of the fact that minimal separators are cliques~\cite{Dirac1961}. By definition of perfect elimination orderings, $N(v_i) \cap \{ v_{i+1},\dots,v_n, v\}$ is a clique and the following inequalities are valid for $\mathcal{C}_T(G)$ : 
		$$ y(\delta(v_i,\{v_{i+1},\dots,v_n,v\} )) \le x_{v_i} \quad \forall  i \in \{1,\dots, n\}.$$
		Therefore, summing the previous family of inequalities with  $x(V) - y(E) = 1$ yields $x_v \le 1$. Moreover, $y$ variables are sandwiched between 0 and $x$ variables.

		$"\Leftarrow"$
		By contradiction, suppose that $G$ is not chordal, and let $H$ induce a hole of length at least 4 in $G$, and let $v$ be a vertex of $H$. By setting $x_w = 1 \ \forall w \in H \setminus v$, $y_e = 1 \ \forall e \in E(H)$, $x_v = 2$ and 0 on all other components, we obtain a point $(x,y)$ which belongs to $\mathcal{T}(G)$ and is not included in the hypercube. This implies that $\mathcal{T}(G)$ is itself not contained in the hypercube and hence not binary.
	\end{proof}

	Since the set of inclusion-wise maximal cliques of chordal graphs is polynomial then, \eqref{iq:neighClique}--\eqref{iq:trivialYe} has compact size.
	Therefore, optimizing a linear function on the vertex and edge set over $\mathcal{T}(G)$ is doable in polynomial time. We denote by $\mathcal{K}_G$ the set of maximal cliques of $G$.
	
	\begin{corollary}\label{cor:inducedtreechordal}
		Given a graph $G$, the following are equivalent:
		\begin{itemize}
			\item $G$ is chordal
			\item the induced tree polytope of $G$ is equal to $\mathcal{T}(G)$
			\item $\mathcal{T}(G)$ is bounded. 
				\end{itemize}
		 Moreover, when $G$ is chordal, $\mathcal{T}(G)$ is described by a linear system of size $O(|V||\mathcal{K}_G|)$.
	\end{corollary}

	\begin{corollary}\label{cor:inducedpathchordal}
		The edge and vertex weighted maximum induced tree is solvable in polynomial time on chordal graphs.
	\end{corollary}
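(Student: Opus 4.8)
The plan is to read this off from the compact polyhedral description already obtained, the only remaining work being to convert that description into an optimization procedure. Fix vertex weights $c \in \mathds{R}^V$ and edge weights $d \in \mathds{R}^E$. The weight of an induced tree $W$ of $G$ is $(c,d)^\top \xi^W$, so maximizing the weight over all induced trees is the same as maximizing the linear function $(x,y) \mapsto (c,d)^\top(x,y)$ over the finite set $\{\xi^W : W \text{ an induced tree of } G\}$, and hence over its convex hull, which is by definition the induced tree polytope of $G$.

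Next I would invoke the previous corollary: since $G$ is chordal, the induced tree polytope equals $\mathcal{T}(G)$, which is cut out by the system~\eqref{iq:neighClique}--\eqref{iq:trivialYe} intersected with the hyperplane~\eqref{iq:VboundE}. By the bound recalled in the Definitions section, a chordal graph has only linearly many maximal cliques, so this system has $O(|V|^2)$ rows in $|V|+|E|$ variables and is therefore of polynomial size. The polytope $\mathcal{T}(G)$ is nonempty, since every singleton $\{v\}$ is an induced tree and $\xi^{\{v\}}$ lies on~\eqref{iq:VboundE}, and it is bounded, as shown inside the proof of Theorem~\ref{the:inducedPathChordal}; hence the linear program $\max\{(c,d)^\top(x,y) : (x,y) \in \mathcal{T}(G)\}$ has a finite optimum attained at an extreme point.

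I would then solve this linear program in polynomial time with any polynomial-time LP algorithm that returns an optimal basic feasible solution (for instance the ellipsoid method followed by the standard purification to a vertex). Its output is an extreme point of $\mathcal{T}(G)$; and since $\mathcal{T}(G)$ coincides with the induced tree polytope while every $\xi^W$ with $W$ an induced tree is already an extreme point of it by Theorem~\ref{the:inducedPathChordal}, the extreme points of $\mathcal{T}(G)$ are exactly the extended incidence vectors of induced trees. Thus the returned vertex is $\xi^{W^\star}$ for some induced tree $W^\star$, which is then an induced tree of maximum weight, recovered in polynomial time.

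I do not expect a real obstacle: the substance is carried by Theorems~\ref{the:Hilbertbasistree} and~\ref{the:inducedPathChordal} and the preceding corollary. The points that need care are purely the bookkeeping ones — checking the linear bound on the number of maximal cliques of a chordal graph (hence the polynomial number of inequalities), and making sure the LP solver is used in a mode that reports an optimal vertex, so that the maximizing tree itself, not merely the optimal value, is obtained.
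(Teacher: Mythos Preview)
Your proposal is correct and follows exactly the intended argument: the paper itself offers no separate proof for this corollary, treating it as an immediate consequence of the preceding corollary that $\mathcal{T}(G)$ is the induced tree polytope for chordal $G$ and is described by a polynomial-size system. The details you supply (polynomial LP solvability, attainment at an extreme point, identification of extreme points with induced trees' extended incidence vectors) are the standard unpacking of that implication and match the paper's reasoning.
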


	\section{Induced path polytope of chordal graphs} \label{sec:inducedpath}
	
	We first recall that chordal graphs are closed by edge contraction as exhibited in~\cite{dupontbouillard2024contractions} and use this result to prove a technical lemma.
	Our proof of the fact that induced paths' extended incidence vectors of chordal graphs form Hilbert bases follows a similar scheme to the one of the previous section. 
	
	We define $(\lambda,\Psi)$ in a similar way to what we did with $(\lambda,\Gamma)$ in the proof of Theorem~\ref{the:Hilbertbasistree}. However, the proof of the analogous result for induced path requires a closer study of $(\lambda,\Psi)$. Therefore, we first exhibit a set of properties of $(\lambda,\Psi)$ in the form of lemmas before diving into the proof itself. 
	
 	In the second section, we deduce an exponential size description of the induced path polytope of chordal graphs and characterize the inequalities that define facets. Finally, we show that the non-dominated constraints come in a polynomial number.

	\begin{lemma}[\cite{dupontbouillard2024contractions}]\label{lemma:chordalcontract}
		If $G=(V,E)$ is chordal, then $G/F$ is also chordal for any set $F \subseteq E$.
	\end{lemma}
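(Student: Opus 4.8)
The plan is to reduce to the case $|F|=1$ and then forbid holes in the contracted graph by lifting them back to $G$. For the reduction, note that $G/F$ is obtained by identifying each connected component of the subgraph $(V(F),F)$ to a single vertex, and that this identification can be performed one edge at a time: contract the edges of a spanning forest of $(V(F),F)$ successively, discarding loops and parallel edges as they appear. Hence it suffices to prove that $G/uv$ is chordal whenever $G$ is chordal and $uv\in E$, and the general statement then follows by induction on $|F|$. This reduction step is conceptually trivial; the only thing I would be careful to state precisely is that iterated single-edge contraction of a spanning forest of $(V(F),F)$ genuinely yields $G/F$.

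For the single-edge case, I would let $w$ denote the vertex of $G/uv$ produced from $u$ and $v$, so that $N_{G/uv}(w)=(N_G(u)\cup N_G(v))\setminus\{u,v\}$, while adjacency between any two vertices of $V\setminus\{u,v\}$ is the same in $G$ and in $G/uv$. Suppose for contradiction that $G/uv$ contains a hole $C$. If $w\notin V(C)$, then $C$ lives on $V\setminus\{u,v\}$ and is therefore already a hole of $G$, a contradiction. So assume $w\in V(C)$, let $a$ and $b$ be the two neighbours of $w$ on $C$, and let $Q$ be the path $C\setminus w$, an induced path of $G/uv$ from $a$ to $b$ which, avoiding $w$, induces the same path in $G$. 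Since $C$ is a hole of length at least $4$, the vertex $w$ has no neighbour on $C$ besides $a$ and $b$; translating this into $G$ via the description of $N_{G/uv}(w)$ gives that neither $u$ nor $v$ is adjacent in $G$ to any internal vertex of $Q$, and that $ab\notin E(G)$.

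Now I would split on how $a,b$ attach to $\{u,v\}$. If some single vertex $z\in\{u,v\}$ is adjacent in $G$ to both $a$ and $b$, then $z$ together with $Q$ is a cycle of length $|C|\ge 4$ in $G$ which, by the observations above, is chordless — a hole of $G$, contradiction. Otherwise, using that $a,b\in N_G(u)\cup N_G(v)$, we may assume $u$ is adjacent to $a$ but not $b$, which forces $v$ adjacent to $b$; and $v$ cannot also be adjacent to $a$, since then $v$ would be adjacent to both. Then $u$, $v$, and $Q$ (traversed from $b$ to $a$ and closed through the contracted edge $uv$) form a cycle of length $|C|+1\ge 5$ in $G$ that is again chordless, and we again contradict chordality of $G$. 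The routine but slightly fiddly point here is verifying chordlessness of this last lifted cycle, i.e. that none of $uz$, $vz$ for a vertex $z$ of $Q$, nor any chord of $Q$ itself, is an edge of $G$; each of these is immediate from $C$ being a hole of $G/uv$ together with the formula for $N_{G/uv}(w)$, but it is the part I would spell out with care.
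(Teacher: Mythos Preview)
Your argument is correct: the reduction to a single contraction is clean, and in the single-edge case your lift of a putative hole $C\ni w$ back to $G$ via the case split on how $a,b$ attach to $\{u,v\}$ is complete. The only places where one has to be careful are exactly the ones you flag, namely checking that neither $u$ nor $v$ sees an internal vertex of $Q$ (which follows from $N_{G/uv}(w)=(N_G(u)\cup N_G(v))\setminus\{u,v\}$ and the fact that $C$ is a hole), that $ab\notin E(G)$, and that $Q$ is chordless in $G$; all of these are immediate.

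As for comparison with the paper: the paper does not prove this lemma at all. It is stated with a citation to~\cite{dupontbouillard2024contractions} and used as a black box (in the proof of Claim~\ref{claim:adjacentorinducepaths}). So you are supplying a self-contained proof where the paper defers to the literature; your argument is the standard one and would serve perfectly well as an inline justification.
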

	
	The proof that chordal graphs' induced path incidence vectors form a Hilbert basis requires to "reorganize" the induced path of a conic combination. We mean by reorganize that we build a new induced path having desirable properties from other induced path. This is what stands for the following lemma.

	\begin{claim}\label{claim:adjacentorinducepaths}
		Given $P_1, P_2$ be two induced paths of a chordal graph having the same extremity $u$, and such that the neighbor $u_1$ of $u$ in $P_1$ is different from the neighbor $u_2$ of $u$ in $P_2$.  Either $u_1$ is adjacent to $u_2$ or $P_2 \cup P_1$ induces a path.
		
	\end{claim}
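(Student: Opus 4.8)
The plan is to argue by contradiction: assume $u_1$ and $u_2$ are not adjacent and show that $P_1 \cup P_2$ then induces a path. Write $P_1 = (u = a_0, a_1 = u_1, a_2, \dots, a_p)$ and $P_2 = (u = b_0, b_1 = u_2, b_2, \dots, b_q)$. Since both are induced paths sharing only the extremity $u$ among their first few vertices, the first structural question is whether $P_1$ and $P_2$ intersect beyond $u$. I would first handle the case where $V(P_1) \cap V(P_2) = \{u\}$, and then reduce the general case to it. In that first case, the vertex set of $P_1 \cup P_2$ is laid out as the concatenation $(a_p, \dots, a_1, u, b_1, \dots, b_q)$, and I must show this sequence induces a path, i.e.\ there are no chords. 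The only possible chords are between some $a_i$ and some $b_j$ with $i,j \ge 1$ (chords within $P_1$ or within $P_2$ are excluded since they are induced), plus the pair $a_1 b_1 = u_1 u_2$ which we have assumed to be a non-edge.

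The key step is to rule out any edge $a_i b_j$ with $i, j \ge 1$. Here is where chordality enters. Suppose such an edge exists; pick one minimizing $i + j$. Then the vertices $u, a_1, \dots, a_i, b_j, b_{j-1}, \dots, b_1$ form a cycle: consecutive ones are adjacent (edges of $P_1$, then the chord $a_i b_j$, then edges of $P_2$ backwards, then $b_1 u$ and $u a_1$). By minimality of $i+j$ and the fact that $P_1, P_2$ are induced, this cycle has no chords except possibly $u_1 u_2 = a_1 b_1$ — but that is a non-edge by hypothesis. If $i + j \ge 3$ this closed walk has length $\ge 4$ with no chord, i.e.\ it is a hole, contradicting chordality. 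The remaining small cases $i = j = 1$ (excluded by hypothesis) and $\{i,j\} = \{1,2\}$ need to be checked by hand: e.g.\ if $a_1 b_2$ is an edge, then $(u, a_1, b_2, b_1)$ is a $4$-cycle whose only possible chord $a_1 b_1 = u_1 u_2$ is absent, again a hole. So no such chord exists, and $P_1 \cup P_2$ induces a path.

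For the general case, where $P_1$ and $P_2$ share vertices other than $u$, I would instead aim directly for the dichotomy. Let $a_i = b_j$ be the common vertex with smallest $i$ (equivalently, walk along $P_1$ from $u$ until first hitting $P_2$). Then $(u, a_1, \dots, a_{i-1}, a_i = b_j, b_{j-1}, \dots, b_1)$ is a cycle whose internal vertices are all distinct; since $P_1$ and $P_2$ are induced, its only possible chord is $a_1 b_1 = u_1 u_2$. If $u_1 \ne u_2$ and they are non-adjacent, and if $i + j \ge 4$, this is a hole — contradiction with chordality. The surviving cases force $i + j \le 3$ with $i, j \ge 1$ and $u_1 \ne u_2$, so $\{i,j\} = \{1,2\}$: say $i = 1$, meaning $u_1 = a_1 \in V(P_2)$, so $a_1 = b_j$ for some $j \ge 2$ (it cannot be $b_1$ since $u_1 \ne u_2$); with $i + j \le 3$ we get $j = 2$, so $u_1 = b_2$, and then $u_1 = b_2$ is adjacent to $b_1 = u_2$ along $P_2$ — giving the first alternative of the claim. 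Symmetrically if $j = 1$. This closes the argument.

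The main obstacle I anticipate is bookkeeping the small cases cleanly: the hole condition requires length at least $4$, so the cycles of length $3$ that arise when the paths meet very close to $u$ must be dispatched separately, and it is exactly there that the hypothesis $u_1 \ne u_2$ together with the conclusion "$u_1$ adjacent to $u_2$" does its work. A secondary subtlety is making sure that when $P_1$ and $P_2$ share vertices we correctly identify the *first* shared vertex along $P_1$ so that the cycle we extract is chordless apart from the controlled pair $u_1 u_2$; phrasing this carefully (e.g.\ "let $a_i$ be the vertex of $P_1 \cap P_2$ minimizing $i$") avoids having to consider tangled intersection patterns.
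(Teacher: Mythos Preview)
Your overall strategy---find an explicit hole when $u_1u_2$ is a non-edge---is sound and is genuinely different from the paper's argument, which instead invokes Lemma~\ref{lemma:chordalcontract} (chordal graphs are closed under edge contraction): the paper contracts everything in $P_1\cup P_2$ outside $\{u,u_1,u_2\}$ down to a single vertex $w$, obtains a $4$-cycle on $\{u,u_1,w,u_2\}$ in the contracted (still chordal) graph, and concludes that $u_1u_2$ must be an edge.  Your approach is more elementary in that it avoids the contraction lemma altogether; the paper's approach is shorter because it sidesteps all the index bookkeeping.

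Your Case~1 (the paths meet only at $u$) is correct; the separate check of $\{i,j\}=\{1,2\}$ is even redundant, since $i+j=3$ already falls under your ``$i+j\ge 3$'' clause.

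There is, however, a genuine gap in your general case.  You pick the first common vertex $a_i=b_j$ along $P_1$ and write down the cycle $(u,a_1,\dots,a_i=b_j,\dots,b_1)$, then assert that ``since $P_1$ and $P_2$ are induced, its only possible chord is $a_1b_1=u_1u_2$''.  That is not justified: minimality of $i$ guarantees that the vertices $a_1,\dots,a_{i-1}$ are \emph{distinct} from $V(P_2)$, but it does \emph{not} exclude \emph{edges} $a_kb_l$ with $1\le k<i$, $1\le l<j$.  Such an edge would be a chord of your cycle and would spoil the hole argument.  The easy fix is to unify your two cases: minimise $i+j$ over all pairs $(i,j)$ with $i,j\ge 1$ for which either $a_i=b_j$ or $a_ib_j\in E$.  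With that single minimality hypothesis the resulting cycle (of length $i+j$ in the first sub-case, $i+j+1$ in the second) is genuinely chordless, and the small cases $i+j\le 3$ land exactly on the alternative ``$u_1$ adjacent to $u_2$'' as you already analysed.
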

	\begin{proof}
		The result is trivial if $P_1$ contains at most two vertices. Suppose now that $P_1$ contains at least 3 vertices.
		By contradiction, suppose that $P_2 \cup P_1$ does not induce a path and that $u_1$ is not adjacent to $u_2$.
		Let $w_1$ be the other element of $P_1$ adjacent to $u_1$, by hypothesis, $w_1 \neq u_2$.
		
		As $P_1$ and $P_2$ have an extremity in common but their union does not induce a path, either their intersection contains $u$ and another vertex, or there exists an edge going from $P_1 \setminus u$ to $P_2\setminus u$. 
		In both situations, either $u_2$ is adjacent to $u_1$ or
		there exist a cycle with vertex set $C\subseteq P_1 \cup P_2¨$ containing $\{u,u_1,w_1,u_2\}$.
		Let us contract all the edges in $E(P_1  \cup P_2  \setminus \{u_1,u_2,u\})$ to a single vertex $w$. The subgraph induced by  $\{u_1,u_2,u,w\}$ contains a cycle of length 4, and $w$ is non-adjacent to $u$ as otherwise $P_1$ and $P_2$ would not induce paths. By chordality of $G$ and Lemma~\ref{lemma:chordalcontract}, $u_2$ is adjacent to $u_1$, a contradiction.
		
	\end{proof}

To show that a set of vectors forms a Hilbert basis, it can be easier to know the polyhedral description of the cone they generate. Therefore, we introduce $\mathcal{C}_P(G)$ as the cone defined by the following linear system.

\begin{align}
	y\Biggl(\delta\Bigl(w,C(K\cup \{w\})\Bigr)\Biggr) + 2y\Biggl(\delta \Bigl(w, K \Bigr)\Biggr) &\le 2x_w &&  \ \forall w \in V, \  \text{ clique } K \subseteq N(w) \label{iq:neighCliquebb}\\
	-x_w & \le  0 &&\forall w \in V \label{iq:trivialx}\\
	-y_e &\le 0 && \forall e \in E \label{iq:trivialYebb}  
\end{align}


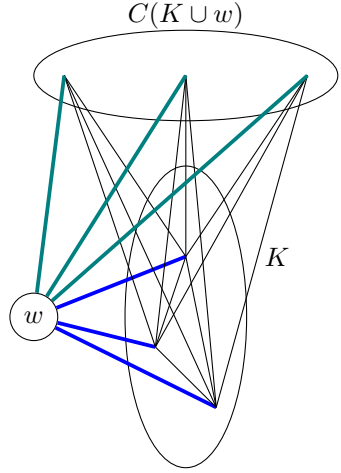
\begin{figure}[h!]
	\centering
	\begin{tikzpicture}
		[state/.style={circle, draw, minimum size=0.5cm},scale = 0.4]
		\node at (8,2){$K$};
		\node[state,draw,circle] (1)at(0,0) {$w$};
		\node at (5,10) {$C(K\cup {w})$};
		\draw (5,0) ellipse (2cm and 5cm);
		\draw[color=blue,line width=0.05cm] (1) -- (4,-1);
		\draw[color=blue,line width=0.05cm] (1) -- (5,2);
		\draw[color=blue,line width=0.05cm] (1) -- (6,-3);
		\draw (4,-1) -- (5,2);
		\draw (5,2) -- (6,-3);
		\draw (6,-3) -- (4,-1);
		\draw (5,8) ellipse (5cm and 1.5cm);
		\draw (1,8) -- (5,2);
		\draw (1,8) -- (6,-3);
		\draw (1,8) -- (4,-1);
		\draw (5,8) -- (5,2);
		\draw (5,8)-- (6,-3);
		\draw (5,8) -- (4,-1);
		\draw (9,8) -- (5,2);
		\draw (9,8)-- (6,-3);
		\draw (9,8) -- (4,-1);
		\draw[color=teal,line width=0.05cm] (1) -- (1,8);
		\draw[color=teal,line width=0.05cm] (1) -- (5,8);
		\draw[color=teal,line width=0.05cm] (1) -- (9,8); 
	\end{tikzpicture}
	
	\caption{Structure of inequalities~\eqref{iq:neighCliquebb} for a given vertex $w$ and clique $K$ of its neighborhood.}
	\label{fig:idneighcliquebb}
	
\end{figure}

Note that Inequalities~\eqref{iq:neighCliquebb} are exactly Inequalities~\eqref{iq:neighClique} when $K$ is a maximal clique of $N(w)$ which implies the following.

\begin{obs}\label{obs:contained}
Given a graph $G$, $\mathcal{C}_P(G)$ is included in $\mathcal{C}_T(G)$.
\end{obs}

For binary points, Inequalities~\eqref{iq:neighCliquebb} express the fact that an induced path contains at most two edges in $\delta(w)$ (for $K = \emptyset$), and the extremities of two such edges cannot induce a triangle (for $K \neq \emptyset$).
We refer to Figure~\ref{fig:idneighcliquebb} for an illustration where the blue edges are the ones involved with coefficient 2 and the green ones are involved with coefficient 1. At most two green edges belong to a same induced path; given a blue edge $e$, for any other colorful edge $e'$, there exists a triangle containing both, therefore no induced path $P$ verifies $\{e,e'\} \subseteq E(P)$. This shows that extended incidence vectors of induced paths belong to $\mathcal{C}_P(G)$.

\begin{obs}\label{obs:belongcpg}
	The induced paths' extended incidence vectors of a graph $G$ belong to $\mathcal{C}_P(G)$.
\end{obs}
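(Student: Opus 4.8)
The plan is to check directly that, for an induced path $W$ of $G$ with extended incidence vector $\xi^W=(\chi^W,\zeta^{E(W)})$, each of the inequalities \eqref{iq:neighCliquebb}--\eqref{iq:trivialYebb} defining $\mathcal{C}_P(G)$ holds. Inequalities \eqref{iq:trivialx} and \eqref{iq:trivialYebb} are immediate since every coordinate of $\xi^W$ lies in $\{0,1\}$, hence is nonnegative. So all the content is in \eqref{iq:neighCliquebb}, which I would verify by fixing a vertex $w\in V$ and a clique $K\subseteq N(w)$ and distinguishing whether $w\in W$.

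If $w\notin W$, then $\chi^W_w=0$, and since $E(W)=E(G[W])$ contains no edge incident to a vertex outside $W$, all the sums on the left-hand side of \eqref{iq:neighCliquebb} vanish; the inequality holds with both sides equal to $0$. If $w\in W$, the right-hand side equals $2$, and I would invoke the structure of induced paths: $w$ is incident to at most two edges of $E(W)$ (one if $w$ is an extremity of $W$, two if $w$ is interior, none if $|W|=1$), and whenever $w$ is incident to two such edges $wu_1$ and $wu_2$ we have $u_1u_2\notin E(G)$ — otherwise $\{w,u_1,u_2\}$ would induce a triangle in $G[W]$, contradicting that $W$ induces a path.

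Granting this, the estimate goes as follows. Since $K$ is a clique and $w$ has at most two neighbors in $W$, at most one neighbor of $w$ in $W$ lies in $K$, so $\zeta^{E(W)}(\delta(w,K))\le 1$. If some neighbor of $w$ in $W$, say $u_1$, lies in $K$, then no neighbor of $w$ in $W$ lies in $C(K\cup\{w\})$: indeed $u_1\notin C(K\cup\{w\})$ because $C(K\cup\{w\})$ is disjoint from $K$, while any other neighbor $u_2$ of $w$ in $W$ satisfies $u_1u_2\notin E(G)$, so $u_2$ is not adjacent to $u_1\in K$ and thus $u_2\notin C(K\cup\{w\})$. Hence in this case $\zeta^{E(W)}(\delta(w,C(K\cup\{w\})))=0$ and the left-hand side of \eqref{iq:neighCliquebb} equals $2\,\zeta^{E(W)}(\delta(w,K))=2$. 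If instead no neighbor of $w$ in $W$ lies in $K$, the second term is $0$ and $\zeta^{E(W)}(\delta(w,C(K\cup\{w\})))$ is bounded by the number of edges of $E(W)$ incident to $w$, which is at most $2$. In both cases the left-hand side is at most $2$, as required.

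I do not expect a genuine obstacle: the only delicate point is the interplay between $K$ and $C(K\cup\{w\})$, namely that a single edge of $W$ at $w$ can contribute to at most one of the two sums, and that one such edge landing in $K$ pushes the other one out of $\delta(w,C(K\cup\{w\}))$ through the non-adjacency of the two interior neighbors of $w$. I would also remark that this argument uses nothing about chordality, so the statement actually holds for every graph $G$; it is phrased for chordal graphs only to match the setting of this section.
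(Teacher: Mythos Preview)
Your proof is correct and follows the same approach as the paper, which justifies the observation in the paragraph immediately preceding it: a ``blue'' edge from $w$ into $K$ forms a triangle with any other edge from $w$ into $K\cup C(K\cup\{w\})$, so at most one such edge can appear in an induced path, and otherwise at most two ``green'' edges into $C(K\cup\{w\})$ are present. Your write-up is simply a more explicit case analysis of this same argument, and your remark that chordality is irrelevant here is also correct.
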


	\subsection{Induced paths of chordal graphs and Hilbert basis} \label{sec:inducedpath1}

	This section is dedicated to the proof of the following theorem.
	\begin{restatable}{theorem}{mainthm}\label{the:hbPath}
		The induced path extended incidence vectors of a chordal graph $G$ form a Hilbert basis of $\mathcal{C}_P(G)$.
	\end{restatable}

	We prove Theorem~\ref{the:hbPath} by induction over the vertex set of any chordal graph $G$. Note that the initilization is given by the single vertex graph for which $\mathcal{C}_P(G)$ is the positive half line trivially generated by a unit vector. The induction proof consists in considering an integer point $(x^*,y^*)$ in $\mathcal{C}_P(G)$, a simplicial vertex $v$ of $G$ (which exists by chordality of $G$) and a nonnegative integer combination of $G$'s induced paths' extended vectors say, the induced path $\Psi$ with coefficient $\lambda$ satisfying the following properties where $(x',y') = \sum_{P \in \Psi} \lambda_P \xi^P$:
	    \begin{enumerate}[label=(\roman*)]
		\item \label{cond:1} $(x',y') \le (x^*,y^*)$
		\item \label{cond:2} $ (x',y')|_{G\setminus v} = (x^*,y^*)|_{G\setminus v} $ 
		\item \label{cond:3} $(\lambda,\Psi)$ minimize the quantity $\begin{array}{l}
			\\
			\Delta(\lambda,\Psi) := \sum_{e \in \delta(v)} \Delta_e(\lambda, \Psi)\\
			\text{where } \Delta_e(\lambda, \Psi) := y_{e}^* - y_e' \ \forall e \in \delta(v) \end{array}$ among the ones that satisfy Conditions~\ref{cond:1}-\ref{cond:2}.
	\end{enumerate}

	Note that $(x^*,y^*)|_{G \setminus v}$ belong to $\mathcal{C}_P(G\setminus v)$, therefore the induction hypothesis implies that $(x^*,y^*)|_{G \setminus v}$ is obtained as an integer conic combination of induced paths of $G\setminus v$. 
	This combination, say $(\Psi',\lambda')$, satisfies Conditions~\ref{cond:1}-\ref{cond:2} but is somehow the worst in terms of Condition~\ref{cond:3} as $\Delta(\lambda',\Psi') = y^*(\delta(v))$. Nevertheless, this proves that $(\lambda,\Psi)$ exists.	
	For the sake of simplicity, we consider that when $\lambda_P = 0$, $P$ does not belong to $\Psi$. 
	
	\medskip

	We first propose a decomposition of $\Psi$, and in the form of lemmas, we exhibit some properties satisfied by $(\lambda,\Psi)$.
	Given $u \in N(v)$,  let us decompose $\Psi$ into the following subsets: 
		\begin{itemize}
		\item  $\Psi^u$ is the subset of $\Psi$ whose elements have $u$ as an extremity and do not contain any other element of $N[v]$ 
		\item  $\Psi_2^u $ is the subset of $\Psi$ whose elements have $u$ as an extremity and contain another vertex $w$ in $N(v)$
		\item $\Psi_{I_2}^u$ is the subset of $\Psi$ whose element's interior contain $u$ and another element $w$ of $N(v)$.
		\item $\Psi_{I_1}^u$ is the subset of $\Psi$ whose elements's interior contain $u$ but no other vertex of $N(v)$
		\item $\Psi_u^v $ is the subset of $\Psi$ whose elements contain both $u$ and $v$, note that $v$ is necessarily an extremity of $\Psi_u^v$'s elements
		\item $\Psi_\bot$ contains the remaining elements.
	\end{itemize}

\begin{figure}[h]
	\centering
	\begin{tikzpicture}
		[state/.style={circle, draw, minimum size=0.5cm},scale = 0.4]
		
		\node[state,draw,circle] (1)at(-2,3) {$v$};
		\node[state,draw,circle] (2)at(3,4) {$u$};
		\node at (4,11) {$N(v)$};
		\node[state,draw,circle] (3) at (4,-2) {$w$};
		\node[state,draw,circle] (4) at (5,1) {$w$};
		
		\node[color=blue] at (13,10) {$\Psi^u$};
		\node[color=teal] at (13,-2) {$\Psi^u_2$}; 
		\node[color=red] at (13,7) {$\Psi^v_u $ };
		\node[color=green] at (13,0) {$\Psi_{I_2}^u$};
		\node[color = violet] at (13,4) {$\Psi_{I_1}^u$};
		\draw (4,3) ellipse (2.5cm and 7cm);
		\draw (1) -- (2,7) ;
		\draw (1) -- (2,-1) ;
		\draw[color=blue,line width=0.05cm] (2.north) -- (8,10);
		\draw[color=blue,line width=0.05cm] (8,10) -- (12,10); 
		\draw[color=teal,line width=0.05cm] (2.south) -- (3);
		\draw[color=teal,line width=0.05cm] (3) -- (12,-2);
		\draw[color = green,line width=0.05cm] (4)-- (2.south east) ;
		\draw[color = green,line width=0.05cm](2.south east) -- (12,1) ;
		\draw[color = green,line width=0.05cm](4.east) -- (12,-1);
		\draw[color = red,line width=0.05cm] (1) -- (2);
		\draw[color = red,line width=0.05cm] (2)-- (12,7);
		\draw[color = violet,line width=0.05cm] (2) -- (12,5);
		\draw[color = violet,line width=0.05cm] (2) -- (12,3);
		
	\end{tikzpicture}
	\caption{Types of induced paths considered in Section~\ref{sec:inducedpath1}}
	\label{fig:inducedpathgraph}
\end{figure}
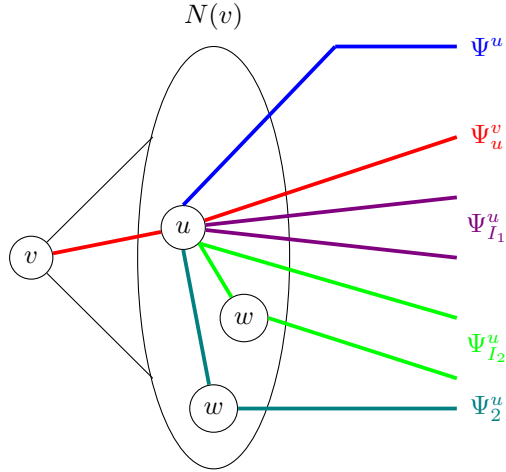

We denote by $K_2^u$ the subset of $N(v)\setminus \{u\}$ contained in at least one path in $\Psi_2^u$, that is:
$$K_{2}^u := \{w \in N(v)\setminus \{u\} : \exists P \in \Psi_2^u  \text{ s.t. } w \in P \}. $$
	As the only induced path of $\Psi$ containing the edge $uv$ belong to $\Psi^v_u$ the following holds:
\begin{equation}
	y_{uv}' = \lambda(\Psi_u^v). \label{eq:decompyuv}
\end{equation}
We refer to Figure~\ref{fig:inducedpathgraph} for an illustration of the different types of path considered, it may be used as a visual aid to keep track of the definitions.

	\begin{lemma}\label{lem:psiuemptyR}
		Given $u \in N(v)$, either $\Delta_{uv}(\Psi,\lambda) = 0$ or $\Psi^u = \emptyset$.
	\end{lemma}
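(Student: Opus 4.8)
The plan is to mirror the exchange argument from the proof of Theorem~\ref{the:Hilbertbasistree}. Assume $\Psi^u \neq \emptyset$ and, for contradiction, that $\Delta_{uv}(\lambda,\Psi) > 0$; I will build from $(\lambda,\Psi)$ a new nonnegative integer combination of induced paths that still satisfies Conditions~\ref{cond:1}--\ref{cond:2} but has strictly smaller $\Delta$, contradicting Condition~\ref{cond:3}. Fix $P \in \Psi^u$, so that $\lambda_P \geq 1$. Since $u$ is an extremity of $P$ and $P \cap N[v] = \{u\}$, the vertex $v$ is adjacent in $P \cup \{v\}$ only to the endpoint $u$, hence $P \cup \{v\}$ is again an induced path. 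The move I aim to perform is to shift a weight $\alpha := \min(\lambda_P, \Delta_{uv}) \geq 1$ from $P$ to $P \cup \{v\}$: this increases $y'_{uv}$ by $\alpha$, leaves every other edge coordinate unchanged (as $P$ and $P \cup \{v\}$ have the same edges except $uv$), and increases $x'_v$ by $\alpha$.

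The only nonroutine point is that this move might break Condition~\ref{cond:1} through the coordinate $x'_v$, so I first quantify the slack there. Because $v$ is simplicial, every induced path containing $v$ has $v$ as an extremity (otherwise its two neighbours along the path, both in $N(v)$, would be adjacent, yielding a triangle), so each path of $\Psi$ through $v$ of positive length uses exactly one edge of $\delta(v)$; therefore $x'_v = \lambda_{\{v\}} + y'(\delta(v))$, where $\lambda_{\{v\}}$ is the coefficient of the trivial path $\{v\}$ in $\Psi$. Next, Inequality~\eqref{iq:neighCliquebb} applied with $w = v$ and the clique $K = N(v)$ (a clique since $v$ is simplicial) reads $y(\delta(v,C(N[v]))) + 2\,y(\delta(v)) \leq 2x_v$; as no vertex outside $N[v]$ can be complete to $N[v]$, we have $C(N[v]) = \emptyset$, and the inequality becomes the valid inequality $y^*(\delta(v)) \leq x^*_v$ for $\mathcal{C}_P(G)$. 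Combining this with $\alpha \leq \Delta_{uv} \leq \Delta(\lambda,\Psi) = y^*(\delta(v)) - y'(\delta(v))$ gives the key estimate $\alpha \leq x^*_v - x'_v + \lambda_{\{v\}}$.

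Now I execute the move together with a correction: shift $\alpha$ units of weight from $P$ to $P \cup \{v\}$ and simultaneously delete $\gamma := \max\{0,\ \alpha - (x^*_v - x'_v)\}$ copies of the trivial path $\{v\}$ from $\Psi$. The key estimate gives $\gamma \leq \lambda_{\{v\}}$, so all coefficients stay nonnegative. In the resulting combination $(\lambda'',\Psi'')$, relative to $(x',y')$ only two coordinates move: $y_{uv}$ increases by $\alpha$ and stays at most $y^*_{uv}$ by the choice of $\alpha$; and $x_v$ becomes $x'_v + \alpha - \gamma$, which equals $x^*_v$ when $\gamma > 0$ and is at most $x^*_v$ when $\gamma = 0$. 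Hence Condition~\ref{cond:1} still holds, and since both altered coordinates are supported on $v$ and $\delta(v)$, the restriction to $G \setminus v$ is unchanged, so Condition~\ref{cond:2} persists. Finally $\Delta_{uv}$ decreases by $\alpha \geq 1$ and the other terms of $\Delta$ are untouched, so $\Delta(\lambda'',\Psi'') < \Delta(\lambda,\Psi)$, contradicting Condition~\ref{cond:3}. Thus $\Delta_{uv}(\lambda,\Psi) = 0$ or $\Psi^u = \emptyset$.

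The main obstacle, and the only place this goes beyond the plain exchange of the induced-tree case, is the bookkeeping on $x'_v$: one has to notice that copies of the trivial path $\{v\}$ can spoil the naive shift, and that the inequality $y^*(\delta(v)) \leq x^*_v$ — hidden inside~\eqref{iq:neighCliquebb} for the clique $K = N(v)$ — is exactly the slack that repairs it. I would otherwise expect no surprises.
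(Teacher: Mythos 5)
Your proof is correct and rests on the same exchange as the paper's: pick $P \in \Psi^u$, shift $\alpha = \min(\lambda_P, \Delta_{uv}(\lambda,\Psi))$ units of weight from $P$ to the induced path $P \cup \{v\}$, and invoke the minimality in Condition~\ref{cond:3}. Where you go beyond the paper is in verifying that the shift preserves Condition~\ref{cond:1} at the coordinate $x_v$: the paper justifies the exchange only via $\xi^{P}|_{G\setminus uv} = \xi^{P\cup \{v\}}|_{G\setminus uv}$, which is silent about $x_v$ (the two incidence vectors actually differ there), whereas you derive $x'_v = \lambda_{\{v\}} + y'(\delta(v))$ from the simpliciality of $v$, combine it with the valid inequality $y^*(\delta(v)) \le x^*_v$ obtained from~\eqref{iq:neighCliquebb} with $w=v$ and $K = N(v)$, and delete copies of the trivial path $\{v\}$ when needed to keep $x_v \le x^*_v$. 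This extra bookkeeping is sound and patches a step the paper leaves implicit; it does not change the overall strategy.
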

	\begin{proof}
		By contradiction let $P$ be an element of $\Psi^u$ and $\alpha=  \min \Bigl(\Delta_{uv}(\lambda,\Psi ), \lambda_P \Bigr)$. Decreasing $\lambda_P$ and increasing $\lambda_{P \cup \{v\}}$ by $\alpha$ yields a new integer conic combination $(\lambda',\Psi')$, of induced paths' extended incidence vectors. Moreover, as $\xi^{P}|_{G\setminus uv} = \xi^{P \cup v}|_{G\setminus uv}$, we have that  $\Bigl( \sum_{P \in \Psi'}\lambda_P' \xi^{P} \Bigr)\Bigm\lvert_{G\setminus uv}  = (x',y')|_{G \setminus uv}$. Therefore $(x',y')$ satisfies Conditions~\ref{cond:1}-\ref{cond:2}, $\Delta_e(\lambda',\Psi') = \Delta_e(\lambda,\Psi) \ \forall e \in \delta(v) \setminus uv$ and $\Delta(\lambda',\Psi') = \Delta(\lambda,\Psi) - \alpha$. By Condition~\ref{cond:3} applied to $(\lambda,\Psi)$ we obtain $\alpha = 0$ and as $\lambda_P > 0 $ we finally obtain $\Delta_{uv}(\Psi,\lambda) = 0$. 
	\end{proof}

	By Lemma~\ref{lem:psiuemptyR}, if $\Delta_{uv}(\Psi,\lambda)$ is nonzero, we have $\Psi^u = \emptyset$. Therefore, the following inequality holds and will be used in the upcoming results:
	\begin{equation}
		x^*_u  = x_u' =  \lambda(\Psi_2^u) +\lambda(\Psi_{I_2}^u) + \lambda(\Psi_{I_1}^u) + \lambda(\Psi_u^v). \label{eq:decompxu} \\
	\end{equation}

	\begin{lemma}\label{lem:psiuI1nonempty}
		Given $u \in N(v)$, either $\Delta_{uv}(\lambda,\Psi) = 0$ or $\Psi^u_{I_1}$ is nonempty.
	\end{lemma}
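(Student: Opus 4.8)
The plan is to argue by contradiction. Assume $\Delta_{uv}(\lambda,\Psi)\neq 0$ and $\Psi_{I_1}^u=\emptyset$. Since $\Delta_{uv}(\lambda,\Psi)\neq 0$, Lemma~\ref{lem:psiuemptyR} gives $\Psi^u=\emptyset$, so identity~\eqref{eq:decompxu} applies, and after dropping the vanishing term $\lambda(\Psi_{I_1}^u)$ it reduces to
$$x_u^*=\lambda(\Psi_2^u)+\lambda(\Psi_{I_2}^u)+\lambda(\Psi_u^v).$$
The idea is then to contradict the assumption $\Delta_{uv}(\lambda,\Psi)\neq 0$ by pushing $y_{uv}^*$ down to $y'_{uv}=\lambda(\Psi_u^v)$ (recall~\eqref{eq:decompyuv}).

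To do so I would evaluate inequality~\eqref{iq:neighCliquebb} at the point $(x^*,y^*)$ with $w:=u$ and $K:=N[v]\setminus\{u\}$. Since $v$ is simplicial, $N[v]$ is a clique, hence so is $K$, and every vertex of $N[v]\setminus\{u\}$ is adjacent to $u$, so $K\subseteq N(u)$ and the inequality is applicable; moreover $K\cup\{u\}=N[v]$. Discarding the nonnegative term $y^*\bigl(\delta(u,C(N[v]))\bigr)$ and halving yields
$$y_{uv}^*+\sum_{w\in N(v)\setminus\{u\}}y_{uw}^*\le x_u^*.$$
Each edge $uw$ with $w\in N(v)\setminus\{u\}$ lies in $G\setminus v$, so Condition~\ref{cond:2} gives $y_{uw}^*=y'_{uw}$, and the sum becomes $\sum_{P\in\Psi}\lambda_P\,\bigl|\{w\in N(v)\setminus\{u\}:uw\in E(P)\}\bigr|$.

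The decisive step — and the one I expect to require the most care — is evaluating this last sum through the decomposition of $\Psi$. Because $N[v]$ is a clique and an induced path is triangle-free, any $P\in\Psi$ contains at most one vertex of $N(v)\setminus\{u\}$, and if it contains such a vertex $w$ then necessarily $uw\in E(P)$ (as $u\sim w$ and $P$ is induced). Such a $P$ cannot lie in $\Psi^u$ or $\Psi_{I_1}^u$, which forbid other vertices of $N(v)$; it cannot lie in $\Psi_u^v$, for then $u$, $v$, $w$ would induce a triangle; and it is not in $\Psi_\bot$ since it contains $u$. Hence it lies in $\Psi_2^u\cup\Psi_{I_2}^u$, and conversely every element of $\Psi_2^u\cup\Psi_{I_2}^u$ uses exactly one edge from $u$ to $N(v)\setminus\{u\}$. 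Thus the sum equals $\lambda(\Psi_2^u)+\lambda(\Psi_{I_2}^u)$; substituting this together with the displayed value of $x_u^*$ into the inequality above leaves $y_{uv}^*\le\lambda(\Psi_u^v)=y'_{uv}$. Combined with $y'_{uv}\le y_{uv}^*$ from Condition~\ref{cond:1}, this forces $\Delta_{uv}(\lambda,\Psi)=y_{uv}^*-y'_{uv}=0$, the desired contradiction. Apart from this combinatorial bookkeeping, every step is a direct substitution into~\eqref{iq:neighCliquebb}.
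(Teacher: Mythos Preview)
Your argument is correct and follows essentially the same route as the paper: apply inequality~\eqref{iq:neighCliquebb} at $(x^*,y^*)$ with $w=u$ and $K=N[v]\setminus\{u\}$, drop the $C(N[v])$-term, identify $y^*(\delta(u,N(v)\setminus\{u\}))$ with $\lambda(\Psi_2^u)+\lambda(\Psi_{I_2}^u)$ via Condition~\ref{cond:2}, and combine with~\eqref{eq:decompxu} and~\eqref{eq:decompyuv} to force $y_{uv}^*\le y_{uv}'$. One tiny wording issue: the claim ``any $P\in\Psi$ contains at most one vertex of $N(v)\setminus\{u\}$'' should be restricted to paths $P$ that also contain $u$ (the only ones relevant to your sum), since a path avoiding $u$ may well meet $N(v)\setminus\{u\}$ in two vertices.
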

	\begin{proof}
		Suppose that $\Delta_{uv}(\lambda,\Psi) \neq 0$ and that $\Psi^u_{I_1} = \emptyset$. 
		As $(x^*,y^*)$ belongs to $\mathcal{C}_P(G)$, by inequality~\eqref{iq:neighCliquebb} associated with $u$ and $K= N[v]\setminus \{u\}$ we get: $x_u^* \ge y^*(\delta(u,K))$ and the right-hand side is equal to $ \lambda(\Psi_2^u) + \lambda(\Psi_{I_2}^u) +y_{uv}^*$ by definition of $(\lambda,\Psi)$. Using equalities~\eqref{eq:decompyuv} and~\eqref{eq:decompxu} we get : $ y_{uv}' \ge y_{uv}^*$ and hence $\Delta_{uv}(\lambda,\Psi) = 0$, a contradiction.
	\end{proof}
	
	\begin{lemma}\label{lem:psiuI1complete}
		Given $u \in N(v)$, the neighbors of $u$ in the elements of $\Psi_{I_1}^u$ are complete to $K_{2}^u \cup \{u\}$.
	\end{lemma}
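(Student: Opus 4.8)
The plan is to argue by contradiction, reorganising the paths of $\Psi$ by means of Claim~\ref{claim:adjacentorinducepaths} and then playing the reorganisation against the minimality~(iii) of $(\lambda,\Psi)$ via Lemma~\ref{lem:psiuemptyR}, in the same spirit as the proofs of Lemmas~\ref{lem:psiuemptyR} and~\ref{lem:psiuI1nonempty}. I would suppose the statement fails: some $P\in\Psi_{I_1}^u$ has a neighbour $a$ of $u$ in $P$ that is not complete to $K_2^u\cup\{u\}$. Since $a\sim u$ and $a\neq u$ hold automatically, the only possibility is that $a=w$ or $a\not\sim w$ for some $w\in K_2^u$; fix a path $Q\in\Psi_2^u$ with $w\in V(Q)$. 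Two easy structural observations come first. Because $N(v)$ is a clique and $u$ is an endpoint of the induced path $Q$, every vertex of $V(Q)\cap N(v)\setminus\{u\}$ is adjacent to $u$ and hence, $Q$ being induced, is the neighbour of $u$ in $Q$; so $V(Q)\cap N(v)=\{u,w\}$ and $Q$ reads $(u,w,\dots)$. Symmetrically, since the interior of $P$ meets $N(v)$ only in $u$, in the case $a=w\in N(v)$ the vertex $a$ must be an endpoint of $P$. Finally $v\notin V(P)\cup V(Q)$, as $\Psi_{I_1}^u$ and $\Psi_2^u$ contain no path through $v$.

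Next I would set up the reorganisation. Cut $P$ at $u$ into the subpaths $P_a$ (ending $\dots,a,u$) and $P_b$ (starting $u,b,\dots$, with $b$ the second neighbour of $u$ in $P$), so $V(P_a)\cap V(P_b)=\{u\}$ and $E(P)$ is the disjoint union of $E(P_a)$ and $E(P_b)$. Applying Claim~\ref{claim:adjacentorinducepaths} to $P_a$ and $Q$, which share the endpoint $u$ with distinct $u$-neighbours $a\neq w$, the hypothesis $a\not\sim w$ gives that $R:=P_a\cup Q$ induces a path (in the case $a=w$ I would instead apply the Claim to $P_b$ and $Q$, using $b\neq w$ and $b\not\sim w$ — the latter because $b$ and $w=a$ are the two $u$-neighbours on the induced path $P$ — to conclude that $P\cup Q$ induces a path). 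Since $V(P_a)\cap V(Q)=\{u\}$, a coordinatewise check yields $\xi^{P}+\xi^{Q}=\xi^{R}+\xi^{P_b}$. Hence, with $\alpha:=\min(\lambda_P,\lambda_Q)>0$, decreasing $\lambda_P$ and $\lambda_Q$ by $\alpha$ and increasing $\lambda_R$ and $\lambda_{P_b}$ by $\alpha$ produces a new nonnegative integer combination of induced paths, say $(\lambda',\Psi')$, with exactly the same $(x',y')$; it therefore still satisfies (i)--(ii), and, since none of $P,Q,R,P_b$ contains $v$, it has $\Delta_e(\lambda',\Psi')=\Delta_e(\lambda,\Psi)$ for every $e\in\delta(v)$ and hence satisfies (iii) as well.

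It then remains to extract a contradiction. I would read off the structure of $P_b$ in $(\lambda',\Psi')$: it has $u$ as an endpoint, and since the interior of $P$ (hence that of $P_b$) avoids $N(v)\setminus\{u\}$, whenever the far endpoint of $P_b$ lies outside $N(v)$ one gets $P_b\in\Psi^u$ for $(\lambda',\Psi')$, so its $\Psi^u$-part is nonempty; Lemma~\ref{lem:psiuemptyR} applied to $(\lambda',\Psi')$ then forces $\Delta_{uv}=0$, which in turn — rerunning the inequality~\eqref{iq:neighCliquebb} computation for $u$ and $K=N[v]\setminus\{u\}$ exactly as in the proof of Lemma~\ref{lem:psiuI1nonempty} — is incompatible with the bad pattern, giving the contradiction. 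The delicate part, which I expect to be the crux, is the complementary situation in which the far endpoint of $P_b$ (or, after swapping the roles of $a$ and $b$, of $P_a$) lies in $N(v)$, together with the case $a=w$: there a single exchange need not create an element of $\Psi^u$, and one must iterate. The real work will be to identify a monovariant of $(\lambda,\Psi)$ — some count of the $N(v)$-vertices occurring as endpoints of the edges at $u$ inside the paths of $\Psi$ — that each such reorganisation strictly decreases, which forces termination at a combination with $\Psi^u\neq\emptyset$ and lets one conclude as above.
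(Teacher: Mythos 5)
Your core mechanism is the right one---split $P$ at $u$ into two maximal subpaths, use Claim~\ref{claim:adjacentorinducepaths} to glue the half containing the offending neighbour onto a path of $\Psi_2^u$ through the non-neighbour $w=u_2$, and play the exchange against the minimality of $(\lambda,\Psi)$---but you stop short of the one move that closes the argument in a single step. In the paper's proof the leftover half (your $P_b$) is not kept bare: it is replaced by $P_b\cup\{v\}$, which is again an induced path because a path of $\Psi_{I_1}^u$ meets $N(v)$ only in $u$ and $u$ is an extremity of $P_b$. With $\alpha=\min\bigl(\Delta_{uv}(\lambda,\Psi),\lambda_P,\lambda_{Q}\bigr)$ (note that $\Delta_{uv}$ must be included in the minimum to preserve Condition~\ref{cond:1}), the exchange $\xi^{P}+\xi^{Q}\mapsto\xi^{P_a\cup Q}+\xi^{P_b\cup\{v\}}$ leaves the restriction to $G\setminus uv$ unchanged but increases $y'_{uv}$ by $\alpha$, so $\Delta_{uv}$ drops by $\alpha$ while Conditions~\ref{cond:1}--\ref{cond:2} are preserved; Condition~\ref{cond:3} then forces $\alpha=0$, and since $\lambda_P,\lambda_{Q}>0$ this yields $\Delta_{uv}(\lambda,\Psi)=0$ outright. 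No iteration, no monovariant, and no detour through Lemma~\ref{lem:psiuemptyR} is needed: the entire ``delicate part'' you defer (far endpoint of $P_b$ in $N(v)$, the case $a=w$, termination of repeated exchanges) does not arise, since under the reading of $\Psi_{I_1}^u$ used throughout the paper both $a\in K_2^u$ and an $N(v)$-endpoint of $P_b$ are excluded. This deferred step is a genuine gap in your write-up, not a routine verification.

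The second problem is your final step. The lemma, as it is actually proved and then used in Theorem~\ref{the:Hilbertbasisinducedpath}, is the disjunction ``either $\Delta_{uv}(\lambda,\Psi)=0$ or the neighbours of $u$ in the elements of $\Psi_{I_1}^u$ are complete to $K_2^u\cup\{u\}$,'' in the same format as Lemmas~\ref{lem:psiuemptyR} and~\ref{lem:psiuI1nonempty}. Once you have reached $\Delta_{uv}(\lambda,\Psi)=0$ you are done; there is no further contradiction to extract, and your plan to rerun the inequality~\eqref{iq:neighCliquebb} computation to make $\Delta_{uv}=0$ ``incompatible with the bad pattern'' cannot succeed---nothing prevents a minimizing combination with $\Delta_{uv}=0$ from containing a path of $\Psi_{I_1}^u$ whose $u$-neighbour misses a vertex of $K_2^u$. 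So the two concrete issues are: (a) the unfinished iteration argument, which the paper avoids by appending $v$ to the leftover half so that Condition~\ref{cond:3} bites immediately, and (b) the attempt to prove a stronger, unconditional statement than the one the argument delivers or the rest of the paper requires.
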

\begin{proof}
	By contradiction, suppose that a neighbor $u_1$ of $u$ in some element $P \in \Psi^u_{I_1}$ is not adjacent to a vertex $u_2 \in K_2^u$ contained in some induced path $P_2 \in \Psi^u_2$.  Let $\alpha = \min \Bigl(\Delta_{uv}(\lambda,\Psi), \lambda_{P}, \lambda_{P_2} \Bigr)$. 
	Let us consider the two inclusion wise maximal induced path contained in $P$ ending in $u$, say $P_1$, $P'$, and let $P_1$ contain $u_1$. By Claim~\ref{claim:adjacentorinducepaths}, $P_1 \cup P_2$ induces a path of $G$.
	Decreasing $\lambda_{P}$, $\lambda_{P_2}$ and increasing $\lambda_{P_1 \cup P_2}$, $\lambda_{P' \cup \{v\}}$ yields a nonnegative integer combination $(\lambda', \Psi')$ of induced paths' extended incidence vectors. Moreover, the following equality holds $\Bigl(\xi^{P} + \xi^{P_2}\Bigr)\Bigm\lvert_{G\setminus uv} = \Bigl(\xi^{P_1 \cup P_2} + \lambda_{P' \cup \{v\}}\Bigr)\Bigm\lvert_{G\setminus uv}$. Therefore, $(\lambda',\Psi')$ satisfies Conditions~\ref{cond:1}-\ref{cond:2}, $\Delta_e(\lambda',\Psi') = \Delta_e(\lambda,\Psi) \ \forall e \in \delta(v) \setminus \{uv\}$ and  $\Delta_{uv}(\lambda',\Psi') =\Delta_{uv}(\lambda,\Psi) - \alpha $. We finally get $\alpha=  0$ and hence $\Delta_{uv}(\lambda,\Psi) = 0$. 
\end{proof}

\begin{lemma}\label{lem:lambdauvzero}
	Given $u \in N(v)$ either $\Delta_{uv}(\lambda,\Psi) = 0$ or $\lambda_{u,v} = 0$.
\end{lemma}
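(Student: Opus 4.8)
The plan is to show that if $\Delta_{uv}(\lambda,\Psi) \neq 0$, then we may assume $\lambda_{u,v} = 0$, i.e. $\Psi_u^v = \emptyset$. Suppose for contradiction that $\Delta_{uv}(\lambda,\Psi) \neq 0$ and $\lambda_{u,v} > 0$, so there is a path $Q \in \Psi_u^v$ (having $v$ as an extremity, with $u$ the neighbor of $v$ along $Q$). By Lemma~\ref{lem:psiuI1nonempty}, since $\Delta_{uv}(\lambda,\Psi) \neq 0$, the set $\Psi_{I_1}^u$ is nonempty; pick $P \in \Psi_{I_1}^u$, so $u$ lies in the interior of $P$ with two neighbors $u_1$, $u_1'$ along $P$, neither of which is in $N(v)$ (by definition of $\Psi_{I_1}^u$). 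The idea is to "reroute": remove $Q$ and $P$ from the combination and add in their place a path that uses the edge $uv$ together with part of $P$, plus leftover pieces, thereby keeping $(x',y')|_{G\setminus v}$ unchanged while strictly decreasing $\Delta_{uv}$ — contradicting Condition~\ref{cond:3}.

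**The key construction.** Let $Q'$ be the sub-path of $Q$ obtained by deleting $v$ (so $Q' = Q \setminus v$, an induced path with extremity $u$, not meeting $N(v)$ since $Q \in \Psi_u^v$ and $v$ is simplicial so $N(v)$ is a clique — any other neighbor of $v$ in $Q$ would be adjacent to $u$, impossible on an induced path). Write $P = P_1 \cup P'$ as the union of the two maximal sub-paths of $P$ ending at $u$, where $P_1$ contains $u_1$ and $P'$ contains $u_1'$. Now $Q'$ and $P_1$ are two induced paths sharing the extremity $u$, with distinct $u$-neighbors $u_1$ (on $P_1$) and the $u$-neighbor of $Q'$ (on $Q'$). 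By Claim~\ref{claim:adjacentorinducepaths}, either those two neighbors are adjacent, or $Q' \cup P_1$ induces a path. In the latter case the new combination is: decrease $\lambda_Q$, $\lambda_P$ by $\alpha := \min(\Delta_{uv}(\lambda,\Psi),\lambda_Q,\lambda_P)$ and increase $\lambda_{Q' \cup P_1 \cup \{v\}}$ — wait, we must be careful that $Q' \cup P_1 \cup \{v\}$ is itself an induced path: $v$ is adjacent to $u$ (extremity of $Q' \cup P_1$) but to no other vertex of that path, again because $N(v)$ is a clique and $Q'$, $P_1$ avoid $N(v)\setminus\{u\}$ except possibly at their far ends — this needs checking — and increase $\lambda_{P'}$ by $\alpha$. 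Restricted to $G \setminus uv$, the multiset $\{Q,P\}$ and $\{Q'\cup P_1\cup\{v\}, P'\}$ have the same edge/vertex contributions (both contain: all of $Q'$, all of $P$'s edges split as $P_1\cup P'$ through the cut at $u$, and the vertex $v$), so Conditions~\ref{cond:1}--\ref{cond:2} persist, $\Delta_e$ is unchanged for $e \in \delta(v)\setminus\{uv\}$, and $\Delta_{uv}$ drops by $\alpha > 0$ since the new combination gains a copy of the edge $uv$ relative to the old — contradiction, forcing $\alpha = 0$, hence $\lambda_{u,v} = 0$.

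**The remaining case and the main obstacle.** The genuinely delicate point is the alternative in Claim~\ref{claim:adjacentorinducepaths}: the $u$-neighbor of $Q'$, call it $q_1$, is adjacent to $u_1$. Then $q_1$, $u_1$, $u$ form a triangle. I expect to handle this symmetrically by swapping the roles of $P_1$ and $P'$ (using $u_1'$ instead of $u_1$): if $q_1$ is adjacent to $u_1'$ as well, then $u$ has two incident edges $uu_1$, $uu_1'$ lying in triangles through $q_1$, and combined with the edge $uv$ (also in a triangle with $q_1$ if $q_1 = u$... no, $q_1 \in N(v)$), we violate inequality~\eqref{iq:neighCliquebb} at vertex $u$ with clique $K = \{q_1\}$ — or more precisely, we get a counting contradiction against $x_u^* = x_u'$ via~\eqref{eq:decompxu}, similar to the proof of Lemma~\ref{lem:psiuI1nonempty}. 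The hard part will be organizing this into a clean case split: one must track exactly which of $u_1$, $u_1'$ is adjacent to $q_1$, ensure the rerouted path $Q' \cup P_1 \cup \{v\}$ (or $Q' \cup P' \cup \{v\}$) is genuinely induced — the real subtlety is ruling out chords between $Q'$ and $P_1$ other than at $u$, which should follow from the fact that $Q$ and $P$ were already induced and any such chord would create a short hole contradicting chordality, possibly invoking Lemma~\ref{lemma:chordalcontract} with a contraction as in the proof of Claim~\ref{claim:adjacentorinducepaths} — and confirm the restriction to $G\setminus uv$ is unchanged so that Condition~\ref{cond:2} is preserved. Once the reroute is shown valid in every sub-case, minimality of $\Delta(\lambda,\Psi)$ gives $\alpha = 0$ and thus $\lambda_{u,v} = 0$.
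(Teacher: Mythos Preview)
Your proposal contains two genuine errors that stem from a misreading of the statement.

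First, $\lambda_{u,v}$ denotes the coefficient $\lambda_{\{u,v\}}$ of the single two-vertex path $\{u,v\}$, not $\lambda(\Psi_u^v)$; the lemma does \emph{not} assert $\Psi_u^v = \emptyset$ (indeed Lemma~\ref{lem:psiuvcomplete} later works with elements of $\Psi_u^v$ having at least three vertices). So you should be taking $Q = \{u,v\}$ specifically, not an arbitrary $Q \in \Psi_u^v$.

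Second, your reroute does not do what you claim. If $Q$ has at least three vertices, then $Q' = Q \setminus \{v\}$ has at least two, and $Q' \cup P_1$ is a path with $u$ in its \emph{interior}; attaching $v$ (adjacent only to $u$) produces a tree with a degree-$3$ vertex at $u$, not an induced path. Even in the case $Q = \{u,v\}$, so that $Q' \cup P_1 \cup \{v\} = P_1 \cup \{v\}$, your claim that ``$\Delta_{uv}$ drops by $\alpha$'' is false: the edge $uv$ occurs exactly once on each side (in $Q$ before, in $P_1 \cup \{v\}$ after), so $y'_{uv}$ is unchanged and minimality of $\Delta$ yields nothing directly. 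Your ``remaining case'' via Claim~\ref{claim:adjacentorinducepaths} is therefore built on a miscount and is not needed.

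The paper's argument supplies the missing idea: with $Q = \{u,v\}$, split $P \in \Psi^u_{I_1}$ at $u$ into $P_1$ and $P_2$, and trade $\{u,v\}$, $P$ for $P_1 \cup \{v\}$, $P_2$. Since $\xi^{\{u,v\}} + \xi^P = \xi^{P_1 \cup \{v\}} + \xi^{P_2}$, the entire $\Delta$ is \emph{unchanged}, so the new combination $(\lambda',\Psi')$ still satisfies Conditions~\ref{cond:1}--\ref{cond:3}. The point is that now $P_2$ has $u$ as an extremity and meets $N[v]$ only in $u$, so $P_2 \in (\Psi')^u$, and Lemma~\ref{lem:psiuemptyR} applied to $(\lambda',\Psi')$ forces $\Delta_{uv}(\lambda',\Psi') = 0$, hence $\Delta_{uv}(\lambda,\Psi) = 0$. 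No adjacency case analysis is required, because with $Q' = \{u\}$ there is no second $u$-neighbor to worry about.
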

\begin{proof}
	Suppose by contradiction that both $\lambda_{u,v} $ and $\Delta_{uv}(\lambda,\Psi)$ are nonzero. By Lemma~\ref{lem:psiuI1nonempty} $\Psi_{I_1}^u$ is nonempty,	
	let $P$ be an element of $\Psi_{I_1}^u$. $P$ contains the two distinct inclusion-wise maximal subsets  $P_1$ and $P_2$ both inducing paths ending in $u$. 
	Let $\alpha = \min (\Delta_{uv}(\lambda,\Psi),\lambda_P,\lambda_{\{u,v\}})$. Note that $P_1 \cup \{v\}$ induces a path. 
	Decreasing $\lambda_{u,v}$, $\lambda_P$ and increasing  $\lambda_{P_2} $, $\lambda_{P_1 \cup \{v\}} $ 
	by $\alpha$ yields a nonnegative integer combination $(\lambda',\Psi')$. As $\xi^{\{u,v\}} + \xi^P = \xi^{\{u,v\} \cup P_1} + \xi^{P_2} $, $(\lambda',\Psi')$ satisfies Conditions~\ref{cond:1}-\ref{cond:2} and $\Delta_{uv}(\lambda,\Psi) = \Delta_{uv}(\lambda',\Psi')$. However, $(\Psi')^u$ is nonempty as it contains $P_2$, by Lemma~\ref{lem:psiuemptyR}, $\Delta_{uv}(\lambda,\Psi) = 0$, a contradiction.
\end{proof}

\begin{lemma}
	Given $u \in N(v)$, either $\Delta_{uv}(\lambda,\Psi) = 0$ or $K_2^u$ is nonempty.
\end{lemma}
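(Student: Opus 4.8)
The plan is to argue by contradiction, assuming $\Delta_{uv}(\lambda,\Psi)\neq 0$ together with $K_2^u=\emptyset$; the latter is equivalent to $\Psi_2^u=\emptyset$, since every element of $\Psi_2^u$ contains a vertex of $N(v)\setminus\{u\}$, hence a vertex of $K_2^u$. Under $\Delta_{uv}(\lambda,\Psi)\neq 0$ I may already invoke the earlier lemmas: Lemma~\ref{lem:psiuemptyR} gives $\Psi^u=\emptyset$ and Lemma~\ref{lem:lambdauvzero} gives $\lambda_{u,v}=0$, so equality~\eqref{eq:decompxu} collapses to $x^*_u=\lambda(\Psi_{I_2}^u)+\lambda(\Psi_{I_1}^u)+\lambda(\Psi_u^v)$.

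The heart of the argument is to evaluate the total $y$-weight carried by the edges of $\delta(u)$ and to compare it with inequality~\eqref{iq:neighCliquebb} instantiated at $w=u$ and the \emph{empty} clique $K=\emptyset\subseteq N(u)$: since $C(\emptyset\cup\{u\})=N(u)$, that inequality reads simply $y(\delta(u))\le 2x_u$. On the $y'$ side, every element of $\Psi$ meeting $u$ lies in $\Psi_{I_1}^u\cup\Psi_{I_2}^u\cup\Psi_u^v$ (because $\Psi^u=\Psi_2^u=\emptyset$), and each of these has $u$ in its interior: this is immediate for $\Psi_{I_1}^u$ and $\Psi_{I_2}^u$, while an element of $\Psi_u^v$ having $u$ as an extremity would, since $v$ is also an extremity of such elements and $u\sim v$, be the edge $\{u,v\}$, which is excluded by $\lambda_{u,v}=0$. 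Hence each element of $\Psi$ through $u$ contributes exactly two edges to $\delta(u)$, so
\[ y'(\delta(u))=2\bigl(\lambda(\Psi_{I_1}^u)+\lambda(\Psi_{I_2}^u)+\lambda(\Psi_u^v)\bigr)=2x^*_u. \]
Now every edge of $\delta(u)$ other than $uv$ is an edge of $G\setminus v$, so Condition~\ref{cond:2} forces $y^*$ and $y'$ to coincide on $\delta(u)\setminus\{uv\}$, whence $y^*(\delta(u))=y'(\delta(u))+\bigl(y^*_{uv}-y'_{uv}\bigr)=2x^*_u+\Delta_{uv}(\lambda,\Psi)$. Substituting this into $y^*(\delta(u))\le 2x^*_u$ yields $\Delta_{uv}(\lambda,\Psi)\le 0$, and since $\Delta_{uv}(\lambda,\Psi)\ge 0$ by Condition~\ref{cond:1} this forces $\Delta_{uv}(\lambda,\Psi)=0$, the desired contradiction.

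I expect the main difficulty to be resisting the temptation to mimic the reorganisation proofs of the preceding lemmas: splitting an element of $\Psi_{I_1}^u$ at $u$ and grafting $v$ onto one half invariably duplicates the vertex $u$, and since $x'_u=x^*_u$ is already tight the modified combination violates Condition~\ref{cond:1}, so no such local move can decrease $\Delta$. The only genuinely delicate point in the counting route above is the exact identity $y'(\delta(u))=2x^*_u$; it rests on the partition of the elements of $\Psi$ through $u$ into $\Psi_{I_1}^u,\Psi_{I_2}^u,\Psi_u^v$ being exhaustive — which is precisely what equality~\eqref{eq:decompxu} encodes, with its $\lambda(\Psi_2^u)$ term vanishing — and on the fact that, $\lambda_{u,v}$ being zero, no surviving path has $u$ as an endpoint. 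Everything after that, namely transporting the identity from $y'$ to $y^*$ via Condition~\ref{cond:2} and reading $y(\delta(u))\le 2x_u$ off~\eqref{iq:neighCliquebb}, is routine.
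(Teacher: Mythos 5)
Your proof is correct and follows essentially the same route as the paper's: both rely on $\Psi^u=\emptyset$ (Lemma~\ref{lem:psiuemptyR}) to get equality~\eqref{eq:decompxu}, on $\lambda_{u,v}=0$ (Lemma~\ref{lem:lambdauvzero}), and on the instance of inequality~\eqref{iq:neighCliquebb} at $u$ with the empty clique, then count the two edges of $\delta(u)$ contributed by each surviving path through $u$. Your bookkeeping, computing $y'(\delta(u))=2x_u^*$ exactly and transferring it to $y^*$ via Condition~\ref{cond:2}, is only a cosmetic reorganization of the paper's chain of inequalities.
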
	
\begin{proof}
	Suppose by contradiction that $\Psi_2^u$ is empty and that $\Delta_{uv}(\lambda,\Psi)$ is strictly positive. 
	By Lemma~\ref{lem:psiuemptyR}, Equality~\eqref{eq:decompxu} holds. Inequality~\eqref{iq:neighCliquebb} associated with $u$ and the empty clique applied to $(x^*,y^*)$ yields: $2x_u^* \ge y^*(\delta(u))$.
	By Lemma~\ref{lem:lambdauvzero}, $\lambda_{u,v}=0$. Therefore, each $\lambda_P$ for $P \in \Psi_{I_1}^u \cup \Psi_{I_2} \cup \Psi^v_u $ contributes two times in $y^*(\delta(u))$, hence the following holds:
	$$  2x_u^* \ge 2\Bigl( \lambda(\Psi_{I_1}^u) + \lambda(\Psi_{I_2}^u)\Bigr) + \lambda(\Psi_u^v) + y_{uv}^*. $$
	Equalities~\eqref{eq:decompyuv}-\eqref{eq:decompxu} yields $y_{uv}' \ge y_{uv}^*$ implying that $\Delta_{uv}(\lambda,\Psi) = 0$, a contradiction.	
\end{proof}

\begin{lemma}\label{lem:psiuvcomplete}
	Given $u\in N(v)$, either $\Delta_{uv}(\lambda,\Psi) = 0$ or the neighbors of $u$ in the elements of $\Psi_u^v$ are complete to $K_{2}^u \cup \{u\}$.
\end{lemma}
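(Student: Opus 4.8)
The plan is to follow the swapping template of Lemma~\ref{lem:psiuI1complete}, with one crucial twist: a path in $\Psi_u^v$ already carries the edge $uv$, so the natural swap will be neutral for $\Delta_{uv}$ and Condition~\ref{cond:3} alone will give nothing; instead the swap will re‑create positive weight on the trivial path $\{u,v\}$, and the contradiction will come from Lemma~\ref{lem:lambdauvzero}. So I would start by assuming $\Delta_{uv}(\lambda,\Psi)\neq 0$, which already yields $\lambda_{u,v}=0$ by Lemma~\ref{lem:lambdauvzero}; in particular $\{u,v\}\notin\Psi$.

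Next I would unpack the structure of a fixed $P\in\Psi_u^v$. Since $v$ is an extremity of $P$, the vertex $u\in N(v)$ lies in $P$, and $N(v)$ is a clique (as $v$ is simplicial), $u$ must be the unique neighbour of $v$ in $P$; since $P\neq\{u,v\}$, $u$ is interior in $P$ and has a second neighbour $u_1\neq v$, with $u_1\notin N(v)$ (otherwise $v,u,u_1$ induce a triangle). Writing $P_1:=P\setminus\{v\}$, an induced subpath of $P$ ending at $u$ whose neighbour of $u$ is $u_1$, we get $V(P)=\{v\}\cup V(P_1)$ and $E(P)=\{uv\}\cup E(P_1)$, both unions disjoint. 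As $v$ is adjacent to $u$ and to every vertex of $K_2^u\subseteq N(v)$, it is already complete to $K_2^u\cup\{u\}$, so only $u_1$ needs attention. Suppose for contradiction that $u_1$ is non‑adjacent to some $u_2\in K_2^u$, and pick $P_2\in\Psi_2^u$ containing $u_2$. Using again that $N(v)$ is a clique and $P_2$ is induced, one checks that $u_2$ is the neighbour of $u$ in $P_2$ and that $v\notin P_2$; since $u_2\in N(v)$ while $u_1\notin N(v)$ we have $u_1\neq u_2$, and as they are non‑adjacent, Claim~\ref{claim:adjacentorinducepaths} (applied in the chordal graph $G$ to $P_1$ and $P_2$, which share the extremity $u$) gives that $P_1\cup P_2$ induces a path. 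A short analysis of how two intervals of this path can share the endpoint $u$ yields $V(P_1)\cap V(P_2)=\{u\}$ and $E(P_1)\cap E(P_2)=\emptyset$, so $V(P_1\cup P_2)=V(P_1)\cup V(P_2)$, $E(P_1\cup P_2)=E(P_1)\cup E(P_2)$, and $u$ is interior in $P_1\cup P_2$.

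With these disjointness facts in hand, the key identity is
\[ \xi^{P}+\xi^{P_2}=\xi^{P_1\cup P_2}+\xi^{\{u,v\}}, \]
verified componentwise on $V$ and on $E$ (note that both sides put weight $1$ on the edge $uv$, which is exactly why the swap does not move $\Delta_{uv}$). I would then set $\alpha:=\min\bigl(\Delta_{uv}(\lambda,\Psi),\lambda_P,\lambda_{P_2}\bigr)>0$ and perform the swap: decrease $\lambda_P$ and $\lambda_{P_2}$ by $\alpha$, and increase $\lambda_{P_1\cup P_2}$ and $\lambda_{u,v}$ by $\alpha$. The four paths $P$, $P_2$, $P_1\cup P_2$ and $\{u,v\}$ are pairwise distinct (again because $v\notin P_2$, $u_1\notin N(v)$ and $V(P_1)\cap V(P_2)=\{u\}$), so the resulting $(\lambda',\Psi')$ is a nonnegative integer combination of induced paths' extended incidence vectors, and by the identity $\sum_Q\lambda'_Q\xi^Q=(x',y')$ is unchanged. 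Hence $(\lambda',\Psi')$ still satisfies Conditions~\ref{cond:1}--\ref{cond:3}; but $\lambda'_{u,v}=\alpha>0$ while $\Delta_{uv}(\lambda',\Psi')=\Delta_{uv}(\lambda,\Psi)\neq 0$, contradicting Lemma~\ref{lem:lambdauvzero} (whose proof uses only Conditions~\ref{cond:1}--\ref{cond:3}) applied to $(\lambda',\Psi')$. Therefore $u_1$ is adjacent to every vertex of $K_2^u$, i.e.\ complete to $K_2^u\cup\{u\}$.

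The step I expect to be the main obstacle is conceptual rather than technical: recognising that, in contrast with Lemma~\ref{lem:psiuI1complete}, the swap here is $\Delta$‑neutral, so one cannot close the argument by minimality of $\Delta$ and must instead observe that the swap has produced a nonzero coefficient on $\{u,v\}$ and feed this into Lemma~\ref{lem:lambdauvzero}. A secondary, purely bookkeeping obstacle is checking the pairwise distinctness of the four paths so that the coefficient updates preserve nonnegativity and integrality.
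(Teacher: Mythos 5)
Your proposal is correct and follows essentially the same route as the paper's proof: both establish $\lambda_{u,v}=0$ via Lemma~\ref{lem:lambdauvzero}, take $P_1$ as the maximal subpath of $P$ ending at $u$ through $u_1$, invoke Claim~\ref{claim:adjacentorinducepaths} to form $P_1\cup P_2$, and exchange via $\xi^{P}+\xi^{P_2}=\xi^{P_1\cup P_2}+\xi^{\{u,v\}}$ to make $\lambda'_{\{u,v\}}>0$, contradicting Lemma~\ref{lem:lambdauvzero}. Your write-up is in fact slightly more careful than the paper's (explicit disjointness checks and the observation that the swap is $\Delta$-neutral), but the underlying argument is identical.
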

\begin{proof}
 Suppose that $\Delta_{uv}(\lambda,\Psi)$ is nonzero. By Lemma~\ref{lem:lambdauvzero}, the elements of $\Psi_u^v$ have at least 3 vertices. Let $P = (v,u,u_1,\dots)$ be an element of $\Psi^v_u$. By construction, $v$ is complete to $K_2^u \cup \{u\}$. Suppose that $u_1$ is not complete to $K_2^u \cup \{u\}$ and, without loss of generality, not adjacent to the vertex $u_2 \in P_2 \cap K_2^u$ for some induced path $P_2 \in \Psi_2^u$. Let 
 $P_1$ be the inclusion-wise maximal induced path included in $P$ ending in $u$ and containing $u_1$. By Claim~\ref{claim:adjacentorinducepaths}, $P_1 \cup P_2$ induces a path. 
 Let $\alpha = \min(\lambda_P, \lambda_{P_2}) $, decreasing $\lambda_{P}$, $\lambda_{P_2}$ and increasing $\lambda_{P_1 \cup P_2} $,  $\lambda_{\{u,v\}} $ by $\alpha$ yields a nonnegative integer combination $(\lambda',\Psi')$ of induced paths' extended incidence vectors. As $\xi^P+ \xi^{P_2} =  \xi^{P_1 \cup P_2} + \xi^{\{u,v\}}$,  $(\lambda',\Psi')$ satisfies Conditions~\ref{cond:1}-\ref{cond:3},  and $\Delta_{uv}(\lambda,\Psi) = \Delta_{uv}(\lambda',\Psi')$. Moreover, we have that $\lambda'_{\{u,v\}}$ is nonzero, therefore by Lemma~\ref{lem:lambdauvzero}, $\Delta_{uv}(\lambda,\Psi) = 0$.
 \end{proof}

	\mainthm*
	
	\begin{proof} 
		Similarly to the proof of Theorem~\ref{the:Hilbertbasistree}, the first step is to show that $\Delta_{e}(\lambda,\Psi)$ is equal to 0 for every $e \in \delta(v)$ for $(\lambda,\Psi)$ satisfying Conditions~\ref{cond:1}-\ref{cond:3}.
		This is done through the use of an inequality~\eqref{iq:neighCliquebb} associated with $u$. 
		Therefore, we need to exhibit the right clique of $N(u)$. 		
		We iteratively construct this clique that we denote by $K_{\Psi}$, and note $\Psi_{K_{\Psi}}$ the subset of $\Psi^u_2 \cup \Psi_{I_2}^u$ whose elements contain a vertex of $K_{\Psi}$. Note that $\Psi_{K_{\Psi}}$ is supposed to be changing along to $K_{\Psi}$.
	
		\medskip 
		Start with $K_{\Psi} = K_2^u$. If there exists an element $P$ of $\Psi_{I_2}^u \setminus \Psi_{K_{\Psi}}$ such that the vertex $ w \in \Bigl(P\cap N(u) \Bigr) \setminus N(v)$ is not complete to $K_{\Psi}$, add $P$ to $\Psi_{K_{\Psi}}$ and add the other neighbor of $u$ in $P$ to $K_{\Psi}$. This procedure is applied until the neighbors of $u$ in the elements of $\Psi_{I_2}^u \setminus \Psi_{K_{\Psi}}$ are complete to $K_{\Psi}\cup \{u\}$.

		\medskip
		
		By Lemmas~\ref{lem:psiuI1complete}-\ref{lem:psiuvcomplete}, the neighbors of $u$ in the elements of $ \Psi_{I_1}^u \cup \Psi_u^v$ are complete to $K_{2}^u \cup \{u\}$. Moreover, by the previous construction, the elements of $\Psi^u_{I_2} \setminus \Psi_{K_{\Psi}}$ satisfy this property for the whole clique $K_{\Psi} \cup \{u\}$.
		
		It remains to show that the neighbors of $u$ in the elements of $\Psi_{I_1}^u \cup \Psi_u^v$ are complete to $K_{\Psi} \setminus K_2^u$. We do so by reorganizing the elements of $\Psi_{K_{\Psi}}$.
		
		Each vertex of $K_{\Psi} \setminus K_2^u$ is contained in an element of $\Psi_{I_2}^{u} \cap \Psi_{K_{\Psi}}$. 
		Let $P$ be an element of $\Psi_{I_1}^u \cup \Psi^v_u$ and let $P'$ be an element of  $\Psi_{K_{\Psi}} \setminus \Psi_2^u$. 
		As $P'$ has been added to $\Psi_{K_{\Psi}}$ by the previous construction, there exists a sequence of induced path, say  $(P_1,\dots, P_{\ell})$ such that $P_1 = P'$, $P_{\ell} \in \Psi_{2}^u$ and all the other elements belong to $\Psi_{I_2}^u$, such that the vertex $u_i \in (N(u) \cap P_i) \setminus K_{\Psi}$ is nonadjacent to the vertex $v_{i+1} \in K_{\Psi} \cap P_{i+1} \ \forall i \in \{1,\dots,\ell -1\}$. 
		For each $P_i, \ i\in \{1,\dots,\ell-1\}$, we consider the two maximal induced path $U_i, V_i$ included in $P_i$ and ending in $u$, with $U_i$ containing $u_i$ and $V_i$ containing $v_i$, we set $V_\ell = P_{\ell}$. Claim~\ref{claim:adjacentorinducepaths} implies that $U_i \cup V_{i+1} \ \forall i \in \{ 1,\dots,\ell-1\}$ induces a path, see Figure~\ref{fig:helpproofuivi} for an illustration where the dashed line represents a nonedge.
		Therefore, for $\alpha = min \{ \lambda_{P_i} : i \in \{ 1,\dots, \ell \}     \}$  decreasing $\lambda_{P_i} \ \forall i \in \{1,\dots,\ell\}$, increasing
		$\lambda_{V_{i+1} \cup U_i} \forall i \in \{1,\dots,\ell\-1\}$ and $\lambda_{V_1},  $ by $\alpha$
		yields a nonnegative combination of induced paths' extended incidence vectors $(\lambda',\Psi')$.
		Moreover $\sum_{i = 1}^\ell \xi^{P_i} = \xi^{V_1} + \sum_{i = 1}^{\ell-1} \xi^{U_i \cup V_{i+1}}$,
		implying that, $(\lambda',\Psi')$ satisfies Conditions~\ref{cond:1}-\ref{cond:3} and $(\Psi')^u_2$ contains $V_1$. 
		By Lemmas~\ref{lem:psiuI1complete} and~\ref{lem:psiuvcomplete}, $P$ contains two vertices adjacent to the only vertex in $V_1\cap K_{\Psi}$.
		Note that this observation can be made for each element in $\Psi_{K_{\Psi}} \setminus \Psi_2^u$ and each element in $\Psi^u_{I_1} \cup \Psi^v_u$. This implies that the neighbors of $u$ in the elements of $\Psi^u_{I_1} \cup \Psi^v_u$ have two vertices complete to $K_{\Psi} \cup \{u\}$.	
		
		\medskip

		As $(x^*,y^*)$ is a point of $\mathcal{C}_P(G)$, it satisfies the constraint~\eqref{iq:neighCliquebb} associated with $K_{\Psi}$ and $u$ that is:
		\begin{equation}
			2x_u' = 2x_u^* \ge  y^*\Biggl(\delta \Bigl(u,C\left(K_{\Psi}\cup \{u\}\right)\Bigl)\Biggr) + 2y^*\Biggl(\delta \Bigl(u,K_{\Psi} \Bigr)\Biggr)  \label{iq:respectUKW}
		\end{equation} 
		By the previous paragraph, each element $P$ of $\Psi_u^v \cup \Psi_{I_1}^u \cup  \Psi_{I_2}^u \setminus \Psi_{K_{\Psi}}$ contains $u$ and its neighbors in $P$ are complete to $K_{\Psi}$.
		This implies that $\lambda_{P}$ contributes two times in $$y^*\Biggl(\delta \Bigl(u, C(K_{\Psi} \cup \{u\}) \Bigr)\Biggr).$$
		
		Therefore, the following holds:
		$$\begin{array}{rl}
			y^*\Biggl(\delta \Bigl(u,C(K_{\Psi} \cup \{u\})\Bigr)\Biggr) &= y^*_{uv} + y^*\Biggl(\delta \Bigl(u, C(K_{\Psi} \cup \{u\})\Bigr) \setminus \{uv\}\Biggr) \\ 
			&= y^*_{uv}+ \lambda(\Psi_u^v) + 2 \lambda(\Psi_{I_1}^u) + 2 \lambda(\Psi_{I_2}^u \setminus \Psi_{K_{\Psi}}).
		\end{array}$$ 
		Moreover, each element of $\Psi_{K_{\Psi}}$ intersects $K_{\Psi}$ and contains $u$, implying that it contributes one time in $y^*\Biggl(\delta \Bigl(u,K_{\Psi} \Bigr)\Biggr)$.
		Using equality~\eqref{eq:decompxu} and inequality~\eqref{iq:respectUKW} we get: 
		$$ 2\Bigl( \lambda(\Psi_2^u) +\lambda(\Psi_{I_2}^u)  + \lambda(\Psi_{I_1}^u) + \lambda(\Psi_u^v) \Bigr) \ge  y^*_{uv}+ \lambda(\Psi_u^v) + 2 \lambda(\Psi_{I_1}^u) + 2 \lambda(\Psi_{I_2}^u \setminus \Psi_{K_{\Psi}})+ 2 \Bigl(\lambda(\Psi_{K_{\Psi}})\Bigr) $$

		Moreover, $\Psi_{I_2}^u = (\Psi_{I_2}^u \cap \Psi_{K_{\Psi}} ) \cup (\Psi_{I_2}^u \setminus \Psi_{K_{\Psi}} )$ and $\Psi_{I_2}^u \cap \Psi_{K_{\Psi}} = \Psi_{K_{\Psi}} \setminus \Psi_{2}^u$.
		Therefore, $y_{uv}' \ge y_{uv}^*$ by equality~\eqref{eq:decompyuv}. 
		This reasoning can be made for each $uv \in \Delta(v)$. Therefore, $\Delta(\lambda,\Psi) = 0$. Setting $\lambda_{\{v\}} = (x^*_v - y^*(\delta(v)))$ for the trivial induced path $\{v\}$ yields the desired nonnegative integer combination of $(x^*,y^*)$ with induced paths' extended incidence vectors. 
\end{proof}

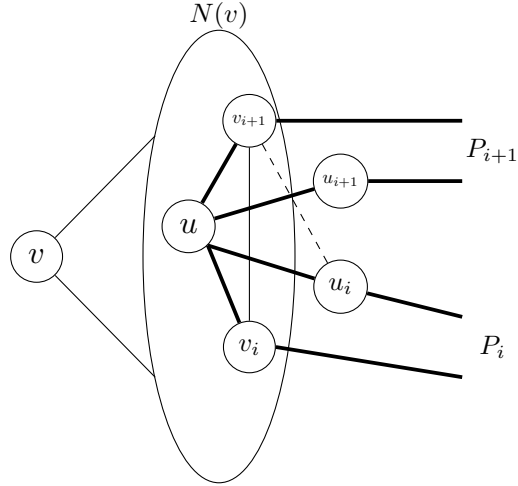
\begin{figure}
	\centering
	\begin{tikzpicture}
		[state/.style={circle, draw, minimum size=0.5cm},scale = 0.4]
		
		\node[state,draw,circle,scale = 1.2] (1)at(-2,3) {$v$};
		\node[state,draw,circle,scale = 1.2] (2)at(3,4) {$u$};
		
		\node at (4,11) {$N(v)$};
		\node[state,draw,circle] (4) at (5,0) {$v_i$};
		\node[state,draw,circle] (5) at (8,2) {$u_i$};
		
		\node[state,draw,circle,scale = 0.7] (6) at (5,7.5) {$v_{i+1}$};
		\node[state,draw,circle,scale = 0.7] (7) at (8,5.5) {$u_{i+1}$};

		\node at (13,0) {$P_i$};
\node at (13,6.5) {$P_{i+1}$};
		\draw (4,3) ellipse (2.5cm and 7.5cm);
		\draw (1) -- (1.9,7) ;
		\draw (1) -- (1.9,-1) ;
		\draw[dashed] (5) -- (6);
		\draw (4) -- (6);
				\draw [line width=0.05cm](2) -- (6) ;
		\draw [line width=0.05cm] (2) -- (7) ;
		\draw[line width=0.05cm] (6) -- (12,7.5);
		\draw[line width=0.05cm] (7)  -- (12,5.5); 
		\draw[line width=0.05cm] (4)-- (2.south east) ;
		\draw[line width=0.05cm](2.south east) -- (5) ;
		\draw [line width=0.05cm] (5) -- (12,1) ;
		\draw[line width=0.05cm](4.east) -- (12,-1);
		
	\end{tikzpicture}
\caption{Structure of the paths exhibited in Theorem~\ref{the:hbPath}}
\label{fig:helpproofuivi}	
\end{figure}

Remember that $C_4$ is a non-chordal graph for which the extended incidence vectors of induced trees form a Hilbert basis. Therefore, as induced trees of $C_4$ coincide with its induced path, Theorem~\ref{the:hbPath} is not a characterization of the graph for which this property holds. This result also raises the question of a characterization of the graphs for which induced paths' extended incidence vectors form Hilbert bases.
	
	\subsection{Induced paths of chordal graphs and integral systems}
	
	In this section, we discuss the consequences of Theorem~\ref{the:hbPath} and use it to give a description of the induced path polytope of chordal graphs.
	Similarly to what we do for the induced tree polytope, we exhibit a description of the induced path polytope of chordal graphs. We denote by $\mathcal{P}(G)$ the polyhedron defined as the intersection of $\mathcal{C}_P(G)$ and the hyperplane:
	
	\begin{equation}
		\{ (x,y) ~|~ x(V) - y(E) = 1\}. \label{iq:VboundEbb}
	\end{equation} 
	
	\begin{theorem}\label{the:inducedpathchordalinteger}
		A graph $G=(V,E)$ is chordal if and only if $\mathcal{P}(G)$ is binary.
	\end{theorem}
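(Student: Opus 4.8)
The plan is to mirror the proof of Theorem~\ref{the:inducedPathChordal}, exploiting the fact that Theorem~\ref{the:Hilbertbasisinducedpath} plays for induced paths exactly the role that Theorem~\ref{the:Hilbertbasistree} played for induced trees. First, for the forward direction ($\Leftarrow$, contrapositive), I would assume $G$ is not chordal, pick a hole $H$ of length at least $4$ with a vertex $v\in V(H)$, and exhibit the point $(x,y)$ with $x_w=1$ for all $w\in V\setminus\{v\}$, $y_e=1$ for all $e\in E(H)$, $x_v=2$, and all other coordinates $0$. One checks this point satisfies \eqref{iq:neighCliquebb}: since $H$ is chordless, for every vertex $w$ and every clique $K\subseteq N(w)$ the only $y$-edges incident to $w$ that are set to $1$ are the (at most two) edges of $H$ at $w$, and two such edges never have adjacent other-endpoints inside a triangle (again because $H$ is chordless), so the left-hand side is at most $2=2x_w$ (and exactly $4=2x_v$ at $w=v$ with $K=\emptyset$, wait — at $v$ the two incident hole edges give left side $2$ against $2x_v=4$, fine). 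It also satisfies \eqref{iq:trivialx}--\eqref{iq:trivialYebb} and lies on \eqref{iq:VboundEbb} since $x(V)-y(E)=(|V|-1+2)-|E(H)|=|V|+1-|V(H)|$; here I should instead take $v$ counted once and note $|E(H)|=|V(H)|$, so $x(V)-y(E)=|V|-1+1=|V|$... this needs the standard normalization, so I would actually set $x_w=0$ for $w\notin V(H)$ as in the tree proof — let me just follow the tree proof verbatim: set $x_w=1$ for $w\in V\setminus\{v\}$, which after recomputation with $|E(H)|=|V(H)|$ gives a point on the hyperplane, and this point is not in the hypercube because $x_v=2$, hence $\mathcal{P}(G)$ is not binary.

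For the reverse direction ($\Rightarrow$), assume $G$ is chordal. Because $\mathcal{C}_P(G)$ is defined by a rational system, each of its $1$-dimensional faces contains integer points, so a Hilbert basis of $\mathcal{C}_P(G)$ is in particular a generating set; by Theorem~\ref{the:Hilbertbasisinducedpath} the extended incidence vectors of induced paths generate $\mathcal{C}_P(G)$. The extreme points of $\mathcal{P}(G)=\mathcal{C}_P(G)\cap\{x(V)-y(E)=1\}$ are then the scalings of generators of $\mathcal{C}_P(G)$ lying on the hyperplane \eqref{iq:VboundEbb}; since every extended incidence vector $\xi^W$ of an induced path $W$ satisfies $\chi^W(V)-\zeta^{E(W)}(E)=|W|-(|W|-1)=1$ (a tree on $|W|$ vertices has $|W|-1$ edges), each such $\xi^W$ already lies on the hyperplane and is binary, so every extreme point of $\mathcal{P}(G)$ is binary.

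It remains to check that $\mathcal{P}(G)$ is bounded, which is the one place requiring a small argument beyond quoting earlier results. I would reuse the device from Theorem~\ref{the:inducedPathChordal}: fix a vertex $v$; since minimal separators of a chordal graph are cliques, there is a perfect elimination order $(v_1,\dots,v_n,v)$ ending at $v$, and for each $i$ the set $N(v_i)\cap\{v_{i+1},\dots,v_n,v\}$ is a clique $K_i\subseteq N(v_i)$, so \eqref{iq:neighCliquebb} applied with $w=v_i$ and this clique gives a valid inequality bounding $y$-edges from $v_i$ into the later vertices by $x_{v_i}$ (the coefficient-$2$ term dominates, or one uses the weaker consequence $y(\delta(v_i,K_i))\le x_{v_i}$). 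Summing these over $i$ together with the equality $x(V)-y(E)=1$ telescopes to $x_v\le 1$; since $v$ was arbitrary this bounds all $x$-variables, and the $y$-variables are squeezed between $0$ (by \eqref{iq:trivialYebb}) and $x$-values (by \eqref{iq:neighCliquebb} with $K=\emptyset$, giving $2y_e\le 2x_w$), so $\mathcal{P}(G)$ is a polytope. The main obstacle is not conceptual but bookkeeping: making sure the summation-to-$x_v\le1$ step works cleanly with the coefficient-$2$ structure of \eqref{iq:neighCliquebb} rather than the simpler \eqref{iq:neighClique} of the tree case — I expect one wants to invoke the implied inequality $\sum_{u\in K_i}y_{v_iu}\le x_{v_i}$, which follows from \eqref{iq:neighCliquebb} since $\delta(v_i,K_i)$ contributes with coefficient $2$ on the left and $x_{v_i}$ with coefficient $2$ on the right — and double-checking that the verification of \eqref{iq:neighCliquebb} for the hole-point in the forward direction genuinely uses chordlessness of $H$ at every vertex, including $v$ itself.
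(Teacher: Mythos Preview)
Your approach is the paper's. Two remarks:

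For boundedness in the chordal direction, the paper avoids repeating the perfect-elimination telescoping by simply noting $\mathcal{P}(G)\subseteq\mathcal{T}(G)$: the tree-cone inequality~\eqref{iq:neighClique} for a maximal clique $K\ni v$ is exactly~\eqref{iq:neighCliquebb} with $w=v$ and clique $K\setminus\{v\}\subseteq N(v)$ (maximality gives $C(K)=\emptyset$, so the coefficient-$1$ term vanishes and dividing by~$2$ recovers~\eqref{iq:neighClique}); boundedness then comes for free from Theorem~\ref{the:inducedPathChordal}. Your direct argument also works, and your derivation of $\sum_{u\in K_i}y_{v_iu}\le x_{v_i}$ from~\eqref{iq:neighCliquebb} is correct.

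For the non-chordal direction, commit to the fix $x_w=0$ for $w\notin V(H)$ and do not ``follow the tree proof verbatim'': with $x_w=1$ for all $w\ne v$ one gets $x(V)-y(E)=|V|+1-|V(H)|\ne1$ unless $V(H)=V$, so the point would not lie on~\eqref{iq:VboundEbb}. Restricting the support to $V(H)$ gives $(|V(H)|+1)-|V(H)|=1$ as needed, and your verification of~\eqref{iq:neighCliquebb} then goes through: since $|H|\ge4$ the two hole-neighbours of any $w\in V(H)$ are non-adjacent, so at most one lies in a given clique $K\subseteq N(w)$, and in that case the other is excluded from $C(K\cup\{w\})$ as well, forcing the left-hand side to be at most $2\le 2x_w$.
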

	\begin{proof}
		$(\Rightarrow)$
		The extreme points of $\mathcal{P}(G)$ are scalings of $\mathcal{C}_P(G)$'s generators which belong to~\eqref{iq:VboundEbb}. Theorem~\ref{the:hbPath} implies that the set of induced paths' extended incidence vectors generate $\mathcal{C}_P(G)$  moreover, they belong to \eqref{iq:VboundEbb}. Finally, by Observation~\ref{obs:contained} and Corollary~\ref{cor:inducedtreechordal}, $\mathcal{P}(G)$ is also bounded.
		
		$(\Leftarrow)$ The proof is by contradiction. The same non binary point exhibited in the proof of Theorem~\ref{the:inducedPathChordal} also belongs to $\mathcal{P}(G)$.
	\end{proof}

	Note that inequalities~\eqref{iq:trivialx} are mandatory only when the graph has isolated vertices. If the input graph has no such vertex, then inequalities~\eqref{iq:neighCliquebb} and~\eqref{iq:trivialYebb} ensure that $x$ is positive. The proof of Theorem~\ref{the:inducedpathchordalinteger} shows that the extreme points of $\mathcal{P}(G)$ are induced paths' extended incidence vectors, which implies the following.
	
	\begin{corollary}
		Given a graph $G$, the following are equivalent:
		\begin{itemize}
			\item $G$ is chordal
			\item $\mathcal{P}(G)$ is binary
			\item $\mathcal{P}(G)$ is bounded
		\end{itemize}   
	\end{corollary}
	
	It turns out that not all inequalities~\eqref{iq:neighCliquebb} define facets; we characterize which ones do.
	
	\begin{theorem}\label{the:facetorbit}
		Given a graph $G = (V,E)$, a vertex $w$, and a clique $K \subseteq N(w)$, the associated inequality~\eqref{iq:neighCliquebb} defines a facet if and only if no connected component of $\overline{G}[C_G(K\cup\{w\})]$ is bipartite.
	\end{theorem}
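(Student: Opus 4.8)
The plan is to prove the two implications separately. Fix the vertex $w$ and the clique $K\subseteq N(w)$, write $C:=C_G(K\cup\{w\})$ (so $K\cap C=\emptyset$ and $K\cup C\subseteq N(w)$), put $\mathcal P:=\mathcal P(G)$, and for any clique $K'\subseteq N(w)$ abbreviate $g_{w,K'}:=2x_w-y\bigl(\delta(w,C_G(K'\cup\{w\}))\bigr)-2y(\delta(w,K'))$, so that~\eqref{iq:neighCliquebb} for the pair $(w,K')$ is exactly $g_{w,K'}\ge 0$. I will use throughout that $\dim\mathcal P=|V|+|E|-1$ and that $x(V)-y(E)=1$ is, up to scaling, the unique implicit equality of $\mathcal P$: indeed the $|V|$ single-vertex and $|E|$ single-edge induced paths give affinely independent points of $\mathcal P$ (the single-vertex vectors span an affine flat of dimension $|V|-1$ in the slice $y=0$, and each single-edge vector has a private nonzero $y$-coordinate), while $\mathcal P$ lies in the hyperplane~\eqref{iq:VboundEbb}.

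For the direction ``some component of $\overline G[C]$ is bipartite $\Rightarrow$ \eqref{iq:neighCliquebb} is not facet-defining'', let $B$ be such a component, with bipartition $B=X\sqcup Y$; since $B\neq\emptyset$ we may assume $X\neq\emptyset$. As $X$ and $Y$ are stable in $\overline G$ they are cliques in $G$, and being subsets of $C$ they are complete to $K$, so $K\cup X$ and $K\cup Y$ are cliques inside $N(w)$, and $g_{w,K\cup X}\ge 0$ and $g_{w,K\cup Y}\ge 0$ are again among the constraints~\eqref{iq:neighCliquebb} defining $\mathcal P$. Expanding $\delta(w,K\cup X)=\delta(w,K)\sqcup\delta(w,X)$ and comparing coefficients yields
\[
g_{w,K}=\tfrac12\,g_{w,K\cup X}+\tfrac12\,g_{w,K\cup Y}+\sum_{z\in C}c_z\,y_{wz},
\]
where, writing $C_X:=C_G(K\cup X\cup\{w\})$ and $C_Y:=C_G(K\cup Y\cup\{w\})$, the coefficient for $z\in C$ is $c_z=\tfrac12[z\in C_X]+\tfrac12[z\in C_Y]+[z\in X]+[z\in Y]-1$. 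Each $c_z\ge 0$: if $z\in X$ then $z\notin C_X$, so $c_z=\tfrac12[z\in C_Y]\ge 0$ (symmetrically for $z\in Y$), and if $z\in C\setminus B$ then $z$ lies in a component of $\overline G[C]$ other than $B$, hence is $G$-adjacent to all of $B$, hence lies in $C_X\cap C_Y$, giving $c_z=0$. Consequently every summand on the right is a nonnegative function on $\mathcal P$, so the face $\{g_{w,K}=0\}\cap\mathcal P$ is contained in the face $F_X:=\{g_{w,K\cup X}=0\}\cap\mathcal P$; and $F_X\subsetneq\mathcal P$ because the single-vertex path $\{w\}$ gives a point of $\mathcal P$ with $g_{w,K\cup X}=2>0$. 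If~\eqref{iq:neighCliquebb} for $(w,K)$ were facet-defining, then $\{g_{w,K}=0\}\cap\mathcal P$ would be a facet, forcing it to equal $F_X$ and hence $g_{w,K}\ge 0$ and $g_{w,K\cup X}\ge 0$ to define the same facet; but modulo $x(V)-y(E)=1$ these are not positive scalar multiples of one another (they disagree on the edges $\delta(w,X)$, with coefficients $1$ and $2$, and $X\neq\emptyset$). This contradiction proves the direction.

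For the converse, assume no component of $\overline G[C]$ is bipartite, and let $F:=\{g_{w,K}=0\}\cap\mathcal P$ (nonempty, as $\xi^{\{z\}}\in F$ for every $z\neq w$, assuming $|V|\ge 2$); I will show that every linear equation valid on $F$ lies in the span of the equations $x(V)-y(E)=1$ and $g_{w,K}=0$, whence $\dim F=\dim\mathcal P-1$. Evaluating $g_{w,K}$ at $\xi^{P'}$ and using that two neighbours of $w$ on an induced path in which $w$ is internal cannot be adjacent, one sees that $\xi^{P'}\in F$ exactly when: $w\notin P'$; or $w$ is an extremity of $P'$ whose neighbour lies in $K$; or $w$ is internal with both neighbours in $C$; or $w$ is internal with one neighbour in $K$ and the other in $N(w)\setminus(K\cup C)$. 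Suppose now $a^\top x+b^\top y=c$ holds on all these vectors. The single-vertex paths $\{z\}$ ($z\neq w$) give $a_z=c$; the single-edge paths $uv$ with $w\notin\{u,v\}$ give $b_{uv}=-c$; the path $\{w,u\}$ with $u\in K$ gives $b_{wu}=-a_w$; and for $z\in N(w)\setminus(K\cup C)$ there is $u\in K$ not adjacent to $z$ (since $z\notin C$), and the path $\{u,w,z\}$ gives $b_{wz}=-c$. The only unknowns left are the $b_{wz}$ for $z\in C$, and here the hypothesis is used: for every non-edge $\{u,u'\}$ of $G$ inside $C$ — that is, every edge of $\overline G[C]$ — the path $\{u,w,u'\}$ lies on $F$ and yields $b_{wu}+b_{wu'}=-c-a_w=:\beta$; on each connected component of $\overline G[C]$ this linear system pins every $b_{w\cdot}$ to $\beta/2$ precisely when the component contains an odd cycle, and since every component is non-bipartite we get $b_{wz}=\beta/2$ for all $z\in C$. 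Collecting the relations, $(a,b,c)$ is determined by the two parameters $a_w$ and $c$, and a direct comparison of coefficients shows this two-parameter family is exactly $\{\mu\,(x(V)-y(E)-1)+\nu\,g_{w,K}:\mu,\nu\in\mathbb R\}$ (take $\mu=c$, $\nu=(c-a_w)/2$); thus the equation space of $F$ is two-dimensional, $\dim F=\dim\mathcal P-1$, and~\eqref{iq:neighCliquebb} for $(w,K)$ is facet-defining. The degenerate cases $K=\emptyset$, $C=\emptyset$, or $w$ incident to every edge are not exceptions: the corresponding families of test paths become empty, and the relations that survive still cut the equation space down to dimension two.

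The one genuinely nontrivial idea is the step in the third paragraph passing from an odd cycle of a component of $\overline G[C]$ to the uniqueness of $b_{w\cdot}$ on that component; this is precisely what separates the two cases (and is the source of the counterexample in the first direction), though once set up it is a one-line linear-algebra argument. The rest is bookkeeping, and that is where I expect the work to be: settling the exact list of induced paths lying on $F$, checking that each small test path invoked ($\{w,u\}$, $\{u,w,u'\}$, $\{u,w,z\}$) is genuinely induced and does exist under the stated hypotheses, and — in the first direction — verifying $c_z\ge 0$ in the borderline case $z\in C\setminus B$.
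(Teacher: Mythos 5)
Your proof is correct, and the two halves relate to the paper differently. The forward direction is essentially the paper's argument: the paper adjoins to $K$ the two sides of the bipartition of \emph{every} bipartite component simultaneously, which makes the half-sum of the two resulting inequalities~\eqref{iq:neighCliquebb} reproduce the target inequality exactly, whereas you use a single bipartite component and absorb the discrepancy into nonnegative multiples of $y_{wz}\ge 0$; both work, and you are more explicit than the paper about why the dominating inequality is not a positive multiple of the original modulo $x(V)-y(E)=1$. The backward direction is a genuinely different route. The paper argues directly: it lists $|V|+|E|-1$ tight induced-path vectors, orders them into a block-triangular matrix whose last block $H$ is the incidence matrix of a spanning non-bipartite subgraph of each component of $\overline{G}[C_G(K\cup\{w\})]$, and cites a rank theorem for such matrices. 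You argue indirectly: you characterize the tight induced paths and show that every affine relation they satisfy lies in the two-dimensional span of $x(V)-y(E)=1$ and the target inequality, the decisive step being that the system $b_{wu}+b_{wu'}=\beta$ over the edges of a connected non-bipartite graph forces $b_{w\cdot}\equiv\beta/2$. These closing steps are dual formulations of the same fact (nonsingularity of the incidence matrix of a connected graph containing an odd cycle), so the combinatorial content coincides; your version avoids the triangular ordering and the external citation and makes the role of the odd cycle transparent, while the paper's has the virtue of exhibiting the affinely independent points explicitly. Two cosmetic points: in your final identification the coefficient should be $\nu=(a_w-c)/2$ rather than $(c-a_w)/2$, and your test paths $\{u,w,z\}$ with $u\in K$ tacitly require $K\neq\emptyset$, which is harmless since $K=\emptyset$ forces $C_G(K\cup\{w\})=N(w)$ and that family to be empty, as you note among the degenerate cases.
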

	\begin{proof}
		$(\Rightarrow)$ Let us consider the connected components $\overline{G}_1,\dots,\overline{G}_k$  of $\overline{G}[C_G(K \cup \{w\})]$ and suppose by contradiction that for $i \le \ell$, $\overline{G}_i$ is bipartite composed of the stable sets induced by $S_i^1, S_i^2$ for some $l \in \{1,\dots k\}$. By construction, $K_j = \bigcup_{i = 1}^\ell S_i^j \cup K \cup \{w\}, \  j = 1,2$ is a clique of $G$ such that $C_G(K_j ) = \bigcup_{i = \ell +1 }^k V(\overline{G}_i)$. 
		By summing the valid inequalities $2x_{w} \ge 2 y (\delta(w,K_j)) + y(\delta(w,C(K_j)))$ for $j = 1,2$ we obtain the inequality associated with $K$ and $w$.  This shows that it is dominated and hence not facet-defining.
		
		$(\Leftarrow)$
		We exhibit $|V| + |E|$ linearly independent points satisfying~\eqref{iq:neighCliquebb} with equality for a given $w$ and $K$. This is sufficient as by equality~\eqref{iq:VboundEbb} the dimension of $\mathcal{P}(G)$ is at most $|V| + |E|-1$. We consider $K_w = K \cup \{w\}$. The points are as follows:
		\begin{enumerate}
			\item $\xi^{\{v\}} \ \forall v \in V \setminus \{ w \}$  \label{point:chiv}
			\item $\xi^{\{u,v\}} \ \forall uv \in E \setminus \delta (w)$
			\item $\xi^{\{u,w\}} \ \forall uw \in \delta(w,K)$
			\item $\xi^{\{u,w,v\}} \ \forall uw \in \delta(w) \setminus \Biggl(\delta \Bigl(K \Bigr) \cup \delta \Bigl(C(K_w) \Bigr) \Biggr) $ for some vertex $v \in K$ non adjacent to $u$ \label{point:chilast}
		\end{enumerate}
		
		Points~\ref{point:chiv} to \ref{point:chilast} satisfy inequality~\eqref{iq:neighCliquebb} for $w$ and $K$ with equality. Moreover, they can be ordered so that the corresponding matrix is triangular. For the remaining $\left|\delta \Bigl(w,C(K_w)\Bigr)\right|$ elements, two edges in $\delta \Bigl(C(K_w) \Bigr)$ are needed to achieve equality; therefore, it is not possible to keep the triangular structure.
	
		The remaining $\left|\delta \Bigl(w,C(K_w)\Bigr)\right|$ points are built the following way: for each connected components $\overline{G}_c$ of $\overline{G}[C(K_w)]$, let us consider $T _c\subseteq E(\overline{G}_c)$ inducing a subgraph containing an odd cycle and hitting every vertex of $G_c$. $T_c$ exists and may be chosen such that $|T_c| = |V(\overline{G}_c)|$: start with an odd cycle and then add the missing edges for this set to span the vertices. Now, match each edge of $\delta(w,C(K_w))$ to a distinct edge of $T = \bigcup T_c$ sharing an extremity, note that $|T| = |C(K_w)|=|\delta(w,C(K_w))|$. Let $uw$ be in  $\delta(w,C(K_w))$ and  $uv$ be its corresponding edge of $T$. In $G$, $u$ and $v$ are non-adjacent, hence $\xi^{u,v,w}$ is the incidence vector of an induced path for which inequality~\eqref{iq:neighCliquebb} for $w$ and $K$ is tight. 
		We refer to Figure~\ref{fig:matrixOfPoints} for an illustration of the matrix whose rows are the points previously exhibited and in the same order, therefore, the lines containing the submatrix $H$ correspond to the last set of $\left|\delta \Bigl(w,C(K_w)\Bigr)\right|$  points.  The symbol "?" stands for the submatrices associated with incidence vectors of 3-vertex paths and whose value does not need to be explicit for the validity of the proof. 
		
		 $H$ is identical to the incidence matrix of the non-bipartite graph induced by $T$, by Theorem 2.1 of~\cite{GROSSMAN1995213} it has full row rank $|C(K_w)|$.
		
		Full row rank of the whole matrix is obtained by the facts that the matrix is block triangular with a non-zero diagonal over the points~\eqref{point:chiv}--\eqref{point:chilast}, and that $H$ is the maximal submatrix with no 0 row having  elements of $\delta \Bigl(w,C(K_w)\Bigr)$ as column set. This implies that the whole matrix has rank $|V| + |E|$, therefore the inequality is facet defining.
	\end{proof}
	
	\begin{figure}
		\centering
		\begin{tikzpicture}
			\node[rotate = 45] at (-4.75,2.3) {\small$V\setminus w$};
			\node[rotate = 45] at (-2.75,2.5) {\small$E\setminus \delta(w)$};
			\node[rotate = 45] at (-0.5,2.6) {\small $\delta(w,K)$};
			\node[rotate = 45] at (2.75,3.2) {\small $\delta(w) \setminus \delta(K) \setminus \delta (C(K\cup \{w\}) ) $};
			\node[rotate=45] at (5,2.6) {\small $\delta(w,C(K \cup \{w\}))$};
			
			\node at (0,0) {
				$\left(\begin{array}{clllllllcl}
					\hline
					I_{V\setminus w}&|&0 \quad \dots&& && \\
					\hline
					\multirow{2}{*}{$\mathcal{E}(G')$}&|&I_{E\setminus \delta(w)}&|&0 \quad \dots &&\\
					\cmidrule{3-9}
					& |&0\quad \dots &|&I_{\delta(w,K)}&| &0 \quad  \dots&\\
					\hline
					&\multicolumn{4}{c}{\multirow{1.5}{*}{?} }&|&I_{\delta(w) \setminus \delta(K) \setminus \delta (C(K))  }&|&0 \quad \dots \\
					\hline
					?& |&0 \quad \dots & &&&& |&H \\
					\hline
				\end{array} \right)$ 
			}
			;
		\end{tikzpicture}
		\caption{Matrix of points built in the proof of Theorem~\ref{the:facetorbit} where $G'$ is the graph with vertex set $V$ and edge set $(E\setminus \delta(w) )\cup (\delta(w) \cap \delta(K)) $ }
		\label{fig:matrixOfPoints}
	\end{figure}
	
	The characterization of Theorem~\ref{the:facetorbit} gives no information about the number of such facet-defining inequalities, we show that they can be separated in polynomial time by a reduction to a min cut problem in an oriented graph.

	\begin{theorem}\label{the:separationofIQ}
		For chordal graphs, Inequalities~\eqref{iq:neighCliquebb} can be separated in polynomial time.
	\end{theorem}
	\begin{proof}
		Let $(x^*,y^*)$ be a point of $\mathcal{P}(G)$.
		We first iterate over each maximal clique $K$ and vertex $v$ of $K$, as the input graph is chordal this is polynomial. Observe that the edges involved in a constraint~\eqref{iq:neighCliquebb} are all incident to $v$, for this reason, consider the weight $m_{u} = y_{uv}^*$ for each neighbor $u$ of $v$. If a most violated Inequality~\eqref{iq:neighCliquebb} is associated with a subset of $K$, $K\setminus K'$ is a set of complete vertices to $K'$, and the other are among  $N(K) \cap N(v) = W$. Therefore, for fixed $w$ and $K$, the separation problem is to find the subset of $K'$ maximizing $2m(K') + m(K\setminus K') + m(C(K')\cap W )$. Let us first cast this problem into a minimization one by observing that the sum of the weight $m$ is constant over $K$ and over $W$ that is to minimize: $m( K\setminus K') + m(W \setminus C(K'))$. Let us build a directed graph $D$ with vertices $\{s,t\} \cup K \cup W$. Let all vertices $w$ of $W$ have a directed arc towards $t$ with weight $m_w$ and let $s$ have an arc towards all vertices $k$ of $K$ with weight $m_k$. Now, add an oriented arc with infinite weight from each vertex of $k \in K$ towards each vertex $w\in W$ such that $kw$ is not an edge of $G$. Let $\delta(Q)$ be a minimum $s-t$ cut of $D$ where $s\in Q$ and $K' =  K \cap Q$, $W' = W \cap Q$. 
		
		The vertices of $K'$ must be though as the subset of $K$ associated with inequality~\eqref{iq:neighCliquebb} and the vertices of $W'$ must be though as the set of vertices not complete to $K'$. 
		Notice that their exist $s$-$t$ cuts containing no infinite weight arcs, either $\delta(s)$ or $\delta(s \cup K \cup W)$, those respectively correspond to the inequalities $y(\delta(w)) \le 2x_v$ and $y(\delta(K,w)) \le x_v$.

		Infinite weight arcs ensure that no vertex not complete to $K'$ belongs to the shore of $t$, this ensures that the vertices $w$ not complete to $K'$ contributes to the objective function by $m_w$. 
		
		We verify that the objective function matches.
		Observe that for fixed $K'$, for every vertex $w$ complete to $K'$, there is no arc going from $K'$ towards $w$, and therefore $w$ must belong to the shore of $t$ to minimize the objective function.		
		Finally, the vertices $k$ of $K\setminus K'$ contributes by $m_v$ to the objective function.
		Notice that the infinite weight directed arc going from $K\setminus K'$ towards $W'$ do not contribute to the cost of $\delta(Q)$ as they are not directed in the correct sense. This shows that the cost of $\delta(Q)$ is exactly $ m(K\setminus K') + m(W')$.
		
		The inequality associated with $K'$ and $v$ is violated if $ m(K\setminus K') + m(W') > 2x_v^*$. 		
	\end{proof}
	
Theorem~\eqref{the:separationofIQ} and equivalence between separation and optimization~\cite{separationOptimization} via the ellipsoid method implies the following.

	\begin{corollary}
		The maximum vertex and weighted induced path problem is solvable in polynomial time in chordal graphs.
	\end{corollary}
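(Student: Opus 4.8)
The plan is to turn the polyhedral results above into an algorithm by reducing the optimization problem to linear optimization over the polytope $\mathcal{P}(G)$. By Theorem~\ref{the:inducedpathchordalinteger} and the corollary following it, when $G$ is chordal the induced path polytope of $G$ equals $\mathcal{P}(G)$, it is bounded, and its extreme points are exactly the extended incidence vectors $\xi^W$ of the induced paths $W$ of $G$. Hence, given weights $c \in \mathds{R}^V$ on the vertices and $d \in \mathds{R}^E$ on the edges, maximizing $\sum_{u \in W} c_u + \sum_{e \in E(W)} d_e$ over induced paths $W$ is the same as maximizing the linear functional $(c,d)$ over $\mathcal{P}(G)$, and an optimal solution can be chosen to be an extreme point, from which the optimal induced path is read off directly.

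It then remains to exhibit, in polynomial time, a linear system of polynomial size describing $\mathcal{P}(G)$. First I would take the system consisting of the equation~\eqref{iq:VboundEbb}, the non-negativity inequalities~\eqref{iq:trivialx}--\eqref{iq:trivialYebb}, and the inequalities~\eqref{iq:neighCliquebb} associated with all pairs $(w,K)$ where $K \subseteq N(w)$ is an orbit-defining clique; by the discussion preceding Theorem~\ref{the:polyorbit}, every inequality~\eqref{iq:neighCliquebb} is dominated by one coming from an orbit-defining clique together with non-negativity constraints, so this restricted system still cuts out $\mathcal{C}_P(G)$, and intersecting with~\eqref{iq:VboundEbb} still yields $\mathcal{P}(G)$. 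To produce this system, for each fixed vertex $w$ I would build the auxiliary chordal graph $H$ on at most $2|V|$ vertices exactly as in the proof of Theorem~\ref{the:polyorbit} (adjoining a true twin to every vertex of $G[N(w)]$ and making the set of twins a clique), compute a perfect elimination ordering of $H$ in polynomial time, enumerate its linearly many maximal cliques, and read off from each one the corresponding orbit-defining clique $K$ of $G[N(w)]$ together with the sets $C_G(K\cup\{w\})$, $\delta(w,K)$ and $\delta(w,C(K\cup\{w\}))$ needed to write down~\eqref{iq:neighCliquebb}. Ranging over all $w \in V$ gives an $O(|V|^2)$-row system in polynomial time (cf. Corollary~\ref{cor:polyinducedpath}), possibly with a few redundant rows, which is harmless.

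Finally I would feed this system to any polynomial-time linear programming algorithm; since its feasible region is the bounded binary polytope $\mathcal{P}(G)$, the optimal basic solution returned is $\xi^W$ for some induced path $W$ of maximum weight. I do not expect a genuine obstacle here: the one point that needs a little attention is upgrading the mere existence statement of Corollary~\ref{cor:polyinducedpath} to an explicit polynomial-time construction, which the constructive proof of Theorem~\ref{the:polyorbit} already provides, together with the standard fact that the maximal cliques of a chordal graph (here $H$) can be listed in polynomial time from a perfect elimination ordering.
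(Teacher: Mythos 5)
Your proposal is correct and follows essentially the same route as the paper: it derives the corollary from the polynomial-size description of $\mathcal{P}(G)$ via orbit-defining cliques (Theorem~\ref{the:polyorbit} and Corollary~\ref{cor:polyinducedpath}), the identification of $\mathcal{P}(G)$ with the induced path polytope for chordal graphs, and polynomial-time linear programming over that system. The extra care you take in making the construction of the system explicit (building the auxiliary graph $H$ for each $w$ and enumerating its maximal cliques from a perfect elimination ordering) is exactly the intended, if unstated, content of the paper's argument.
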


	\section*{Conclusion}
	
	In this article, we show that the induced tree and path extended incidence vectors of chordal graphs respectively, form two Hilbert bases. We use this fact to give two linear systems of inequalities for which the binarity of integer points and boundedness characterizes chordality. It turns out that there exist non-chordal graphs for which induced tree and path extended incidence vectors form a Hilbert basis, which raises the question of characterizing such graphs. 
	These results imply that both corresponding linear optimization problems are polynomially solvable.
	In general, showing that some subsets of vectors of a ground set $\mathcal{V}$ forms Hilbert bases is used to show that the linear system whose rows are elements of $\mathcal{V}$ is TDI. In this work, we do not consider such systems and observe that extended incidence vectors of induced trees and path form Hilbert bases in chordal graphs. It is still a question whether these new Hilbert basis can be of any use to find new TDI systems. Moreover, our work does not characterize the graphs for which extended vectors of induced paths and induced trees form Hilbert bases, we have no conjecture on an induced subgraph characterization of those. However, as a first step to address this question we propose the following conjecture.
	
	\begin{conjecture}
The induced path's (resp induced tree) extended incidence vectors of of odd hole's complements do not form Hilbert bases and this property does not hold for their induced subgraphs.
\end{conjecture}  
	
	\section*{Statements and Declarations}
	
	The author thanks the ANR project ANR-24-CE23-1621 EVARISTE for financial support and declares that he has no conflicts of interest. 

	\bibliographystyle{cas-model2-names}
	
	\bibliography{bibtex}
\end{document}